\newtheorem{theorem}{Theorem}
\newtheorem{satheorem}{Theorem A.}
\newtheorem{assumption}{Assumption}
\newtheorem{corollary}{Corollary}
\newtheorem{lemma}{Lemma}
\newtheorem{remark}{Remark}
\newtheorem{saassumption}{Assumption}
\newcommand{\indep}{\perp \!\!\! \perp}
\newenvironment{proof}[1][Proof]{\noindent\textbf{#1.} }{\ \rule{0.5em}{0.5em}}
\newenvironment{customlegend}[1][]{%
	\begingroup
	% inits/clears the lists (which might be populated from previous
	% axes):
	\csname pgfplots@init@cleared@structures\endcsname
	\pgfplotsset{#1}%
}{%
	% draws the legend:
	\csname pgfplots@createlegend\endcsname
	\endgroup
}%
\numberwithin{assumption}{section}
\numberwithin{saassumption}{section}
\def\addlegendimage{\csname pgfplots@addlegendimage\endcsname}
\begin{document}
	
	\title{Bounds for Treatment Effects in the Presence of Anticipatory Behavior}
	\author{Aibo Gong\thanks{School of Economics, Peking University}}	
	\maketitle	
	
	\begin{abstract}
	In program evaluations, units can often anticipate the implementation of a new policy before it occurs. Such anticipatory behavior can lead to units' outcomes becoming dependent on their future treatment assignments. In this paper, I employ a potential-outcomes framework to analyze the treatment effect with anticipation. I start with a classical difference-in-differences model with two time periods and provide identified sets with easy-to-implement estimation and inference strategies for causal parameters. Empirical applications and generalizations are provided. I illustrate my results by analyzing the effect of an early retirement incentive program for teachers, which the target units were likely to anticipate, on student achievement. The empirical results show the result can be overestimated by up to 30\% in the worst case and demonstrate the potential pitfalls of failing to consider anticipation in policy evaluation.
	\end{abstract}
	\thispagestyle{empty}
	\newpage
	\setcounter{page}{1}
	
\section{Introduction}
This paper accommodates the matter of anticipation in the analysis of treatment effects, by employing a potential-outcomes framework in a difference-in-differences (DID) model. The concept of anticipation is familiar to researchers in economics and social sciences, as seen in the work of \cite{malani2015interpreting}, as well as \cite{bovskovic2018much}, for example. When anticipation occurs, forward-looking units change their behavior in reaction to the possibility of a new policy, and thus, a treatment has an impact before its implementation. Therefore, considering the role of anticipation is crucial when evaluating an economic process and its outcome. However, despite its importance, most available published studies do not formally consider anticipation. People usually make a ``no anticipation'' assumption, combined with a procedure of dropping data closely before the treatment if this assumption is possibly violated, based on the argument that anticipation occurs only within a fixed time period prior to the introduction of a policy. Even in the few cases in which anticipation is taken into account, the anticipatory behavior is accounted for in a restricted manner, such as an ad-hoc restriction on units' forward-looking behavior like rational or adaptive expectations.

When anticipatory behavior takes place, the identification strategies commonly used with multiple periods, such as the DID model, fall apart. Consider an early retirement incentive (ERI) program for teachers nearing the age of retirement. If those teachers foresee the possibility of retiring early, their behavior might change before the program is introduced. Due to the effect of future treatment status on pre-treatment outcomes, the observable pre-treatment outcomes are no longer drawn from the distribution of potential outcomes, if the treatment never takes place. Individuals will change their responses according to how they expect to be treated in the future. Thus, further information about units' anticipatory behavior is required. However, such information is usually unattainable, because it is generally impossible to observe.

This paper provides novel strategies to build identified sets for treatment effects under assumptions restricting the anticipatory behavior. Easy-to-implement estimation and inference strategies are also provided. I start from a DID model with two time periods, and then generalize it to incorporate more complex models. I provide conditions for partial identification results of causal parameters when the anticipation status is unknown, and I incorporate anticipation in many widely used empirical designs. Employing a potential-outcomes framework, I analyze the treatment and the effects of anticipation based on the treatment rules, the anticipation assignments, and outcomes.

The departure from point identification starts with formulating restrictions on anticipatory behavior. In most cases, I do not have additional information, such as proxy variables, that helps us identify which participants have anticipated the policy change. As a result, I can say nothing about the pre-treatment distortion caused by anticipation. In this paper, I introduce a two-period DID model where anticipation occurs in the first period and the treatment occurs in the second period. Further, I introduce two natural assumptions to help construct bounds for the treatment effect in the absence of such additional information. The first is a bound for the proportion of anticipators within the treatment group. This bound should be available from observed data. It can be a constant, or a parameter that can be estimated. This practice is common in the literature, such as the work of \cite{manski2013deterrence}. The selection of this bound can vary from application to application, with one possible example being the treatment ratio. I provide models to motivate specific choices of the bound under various circumstances. The second assumption restricts the magnitude of the anticipatory effect. It requires that the absolute value of the anticipatory effect is no larger than that of the actual treatment effect. By doing so, I build a link between the magnitude of an anticipator's reaction and the response to the implementation of the policy. Based on this relationship, an inequality between the treatment effect and the average pre-treatment bias caused by anticipation can be constructed with the help of the proportion of anticipators discussed above. Therefore, I can find a corresponding treatment effect range for the anticipators by characterizing how they react and linking that anticipatory effect to the actual treatment effect. Under these two assumptions, the fraction of units that anticipate the policy change may vary, but the average distortion caused by anticipation is bounded and the parameter of interest is set identified. 

As for the implementation purpose, I propose estimation and inference strategies based on easy-to-implement modifications to existing methods. The identification strategy provides an identified set with perfectly correlated and proportional upper and lower bounds. I propose a uniformly valid confidence set for my estimators with some modifications to \cite{imbens2004confidence} under this specific setup. In their method, the upper and lower bounds of the confidence set are found by extending both sides of the identified set. The extended lengths are proportional to the standard errors of the bound estimators and differ between upper and lower bounds. However, suppose this method is applied directly here. In that case, it may run into a counterintuitive situation in which the confidence set for the treatment effect is shorter when the parameter is partially identified than when it is point-identified. I propose modifying this approach by extending both sides of the identified set by the same length proportional to the larger standard error of the two, which is a natural way to ensure the uniform validity. Analyzing this confidence set also provides researchers with further empirical implications. When the treatment and the anticipatory effect go in different directions, I find a specific range of t-statistics for the zero treatment effect null hypothesis. If the t-statistic obtained when anticipation is ignored falls within this range, the conclusion of whether to reject the null hypothesis does not change when considering anticipation. This confidence set also helps build a framework for sensitivity analysis on certain conclusions of interest by choosing different bounds for anticipation possibility. 

I apply the results of this paper to examine the effect of an early retirement incentive program on student achievement. This program is aimed at teachers near the age of retirement and offers them financial incentives to retire before becoming eligible for full pension benefits. If the program is anticipated, eligible teachers may react in advance of its introduction, and such behavior might affect students' grades. The empirical results illustrate the potential pitfalls of failing to consider anticipation in program evaluation: the effect can be greatly overestimated in the worst case. I also conduct a sensitivity check by analyzing the level of anticipation probability one is willing to tolerate while maintaining the consistency of the original conclusion. It shows the conclusion is robust even when about three fourths of target units anticipate. 

To permit the incorporation of anticipation in other common empirical setups, I provide several modifications. Instead of focusing only on the pre-treatment effect of anticipatory behavior in the treated group, I discuss the anticipatory behavior in the control group by introducing an imperfect anticipation setup, where individuals make mistakes while anticipating. Post-treatment effects of anticipatory behavior are also discussed. To be consistent with common empirical approaches, generalizations to include covariates, multiple periods, and nonlinear potential outcomes are provided and analyzed in the appendix.

This paper contributes to the literature on causal inference and program evaluation; see \cite{abadie2018econometric}, \cite{athey2017econometrics} for example. My paper is most closely related to the work of \cite{malani2015interpreting}, who discuss anticipatory behavior by interpreting the pre-trend phenomenon as a result of anticipation. The authors propose a parametric time series model in which anticipation is an expectation of the future treatment for everybody, by relying on the rational or adaptive expectation assumption. I incorporate the idea of anticipation in a DID framework with potential outcomes to remove parametric restrictions and allow heterogeneous anticipatory behavior among units. \cite{heckman2007dynamic}, under a different scenario, present a reduced form dynamic treatment effect model that also permits anticipation, but at the price of imposing further assumptions on the functional structure of the outcome equation. 

This paper also contributes to the literature on DID and event-study designs by considering anticipatory behavior. The additional anticipation could have an impact prior to the introduction of a policy. Therefore, the present research is related to the literature aiming at more robust inference and identification strategies that allow for non-parallel trends assumptions, and to papers focusing on pre-trend analysis. 

To interpret and deal with observed changes in outcomes prior to a treatment, \cite{manski2018right} propose a result on partial identification for the average treatment effect under ``bounded variation'' assumptions. These assumptions relax the parallel trends assumption by allowing for differences within a certain magnitude. \cite{rambachan2022more} follow the idea that pre-treatment differences in trends are informative about counterfactual post-treatment differences and provide identification and inference results based on several common restrictions of this relationship. \cite*{freyaldenhoven2019pre} propose a method that includes an additional covariate that is correlated with the outcomes through confounds only, and not treatments. \cite*{ye2021negative} propose a partial identification method for treatment effects with two groups of control units whose outcomes exhibit a negative correlation relative to the treated units. In this paper, I interpret pre-trends as a result of unobservable anticipation activities. People may change their behavior because of their anticipation of future treatment. If people have information and may benefit by acting on it before a treatment, anticipation is a reasonable explanation for an observed pre-treatment effect, even when the parallel trends assumption is valid.

This paper is also complementary to the causal interpretation of event-study coefficients; see \cite{borusyak2017revisiting}, \cite{sun2020estimating}, \cite{de2020two}, and \cite{goodman2021difference}. With a generalization to the longitudinal data, this paper can be regarded as relaxation of the ``no anticipation'' assumption of these papers. This paper is also related more generally to the partial identification literature. In the present study, partial identification is obtained through moment inequalities, a method that is discussed in \cite{molinari2020microeconometrics}.

The rest of this paper is organized as follows: Section 2 generalizes the commonly used DID model and introduces the basic setup about anticipation; Section 3 provides extra assumptions and shows readers how to build the identified sets; Section 4 describes estimation and inference; Section 5 provides an empirical application; Section 6 offers further discussion; and Section 7 concludes. The mathematical proofs, together with some additional results, discussions, and generalizations, are collected in the supplemental appendix.
\section{Setup and Assumptions}
To illustrate anticipation in program evaluation, I consider an early retirement incentive program available for teachers that offers experienced teachers financial incentives to retire before they would be eligible for full pension benefits. Suppose one is interested in the effect of this early retirement incentive program on students' grades. Anticipation from teachers, whether treated or not, can be expected for several reasons in this program. Teachers who anticipate may have received inside information from others, and they can also speculate based on changes that have already happened. Younger teachers ineligible for the program won't react to it regardless of the anticipation status in both cases. However, teachers who anticipate the program and decide to retire early may put in less effort than younger teachers. Such behavior may harm students' grades before the implementation of the early retirement incentive program, and ignoring anticipation can lead to a bias while analyzing the effect of this program. The fact that teachers can anticipate based on unobservable information and adjust their behavior accordingly to gain benefits implies the future treatment will have an effect before its adoption and distort the treatment effect estimation if ignored. An accurate assessment of anticipation is therefore essential for the program evaluation.
\subsection{The Basic Difference-in-Differences Model}
First I briefly describe the ``canonical'' two-period DID model in this section. As a well-understood starting point, this simple setting serves as a good baseline for understanding the approach I use.

Consider a model with two periods $t\in\{0,1\}$ and $n$ units, $i\in\{1,\dots,n\}$. Each unit is assigned an observable binary treatment $D_i$ that takes value $d\in\{0,1\}$ in the second period. The key identifying assumption requires that the treated and control group follow parallel trends in the absence of treatment, and the parameter of interest is the average treatment effect for treated (ATT).

Potential outcomes, defined below, depend on the time period and binary treatment status. The potential outcome for unit $i$ in period $t$ is denoted by the random variable $Y_{it}(d)$. Given a value of the implemented treatment $d$, the observed outcome of unit $i$ at period $t$, $Y_{it}$ can be written as
\begin{equation*}
	Y_{it}=\sum_{d\in\{0,1\}}Y_{it}(d)\mathbb{I}(D_i=d)=D_iY_{it}(1)+(1-D_i)Y_{it}(0),
\end{equation*}
and the parameter of interest $\mu=\mathbb{E}[Y_{i1}(1)-Y_{i1}(0)|D_i=1]$. For identifying purpose, I need to assume \textit{``Parallel Trends''} and \textit{``No Anticipation''}, which require
\begin{equation*}
	\mathbb{E}[Y_{i1}(0)-Y_{i0}(0)|D_i=1]=\mathbb{E}[Y_{i1}(0)-Y_{i0}(0)|D_i=0],
\end{equation*}
and
\begin{equation*}
	Y_{i0}(0)=Y_{i0}(1)\qquad\text{for all $i$.}
\end{equation*}
Under these two assumptions, the parameter of interest $\mu$ is identified, and for estimation purposes, I need to further put independent restrictions on the sampling process. The key idea here is that following the parallel trends assumption, one can use the change in the control group to mimic that in the treated group in the absence of treatment and get information about the unobservable potential outcomes for the treated group in the absence of treatment in the post-treatment period. However, as pointed out, this approach requires no anticipatory behavior, which assumes that the post-treatment status should have no impact on pre-treatment outcomes. This assumption may be too restrictive in some situations, for example, the early retirement incentive program mentioned above. Thus, figuring out a way to accommodate anticipatory behavior under this DID setup is important.
\subsection{Introducing Anticipatory Behavior}
To deal with the unobservable anticipatory behavior, I introduce another indicator for anticipation status. Suppose that in the first period, each unit has an unobservable binary anticipation status $A_i$ that takes value $a\in\{0,1\}$. Here, $a=1$ means this unit anticipates the future, and $a=0$ means this unit does not anticipate the future. The potential outcomes can now be written as $Y_{it}(a,d)$. Introducing a second index in the expression of potential outcomes is common when analyzing indirect effects, for example, analysis of spillover effects in \cite{vazquez2021identification}. By implicitly assuming perfect anticipation, which means the anticipated treatment status should be the same as the actual treatment, I focus only on the pre-treatment anticipatory behavior in the treated group at this time. Later, I discuss anticipatory behavior in the control group and the post-treatment effect of the anticipatory behavior. After introducing another index for the anticipatory behavior, potential outcomes, defined below, can now depend on the binary treatment and anticipation status. I refer to the existence of the latent treatment in the first period as \textit{anticipation} and the effect of the anticipatory behavior on unit $i$'s potential outcome before the treatment occurs as the \textit{anticipatory effect}.

As stated above, I focus only on the pre-treatment anticipatory behavior in the treated group now, which means I allow $Y_{i0}(0,1)$ and $Y_{i0}(1,1)$ to be different from each other with no changes on $Y_{i0}(0)$, $Y_{i1}(0)$ and $Y_{i1}(1)$. With a little abuse of the notation, I use the single index expression $Y_{it}(d)$ for the potential outcomes when the anticipation status makes no difference. Because of the existence of anticipatory behavior, the observed pre-treatment outcomes of the treated group are now a mixture of those who anticipate and those who don't. This mixture brings extra difficulties in identification, because one cannot distinguish the anticipators from those who do not anticipate, and further assumptions are needed. To start with, I consider those assumptions that come from the canonical DID model.
\begin{assumption}[Sampling]
	\label{sampling}
	$\{Y_{i0}(0,1), Y_{i0}(1,1), Y_{i0}(0), Y_{i1}(0), Y_{i1}(1), D_{i},A_{i}\}_{i=1}^{n}$  are independently and identically distributed across $i$.
\end{assumption}

Assumption \ref{sampling} models the sampling process and states the potential outcomes, treatments, and unobservable anticipation status to be independent and identically distributed across units so that expectations are not indexed by $i$.
\begin{assumption}
	\label{onlychannel}
	$Y_{i0}(0)=Y_{i0}(0,1).$
\end{assumption}

Recall that $Y_{i0}(0)$ represents the pre-treatment potential outcome if one will not get treated. This assumption requires future treatment does not make a difference for the pre-treatment outcome in the absence of anticipatory behavior. One can interpret this assumption in a way that anticipation of a future treatment is the only channel through which future events affect the present. In the early retirement incentive program example, this assumption implies there is no difference in the pre-treatment grades for students taught by the same teacher regardless of the teacher's decision about early retirement if he has no anticipation of the program.

The table below shows potential outcomes in the DID model with two periods.
\begin{center}
	\begin{tabular}{c | c c} 
		\hline
		& $t$=0 & $t$=1 \\
		\hline
		$D_i=0$ & $Y_{i0}(0)$ & $Y_{i1}(0)$  \\ 
		
		$A_i=0$, $D_i=1$ & $Y_{i0}(0,1)$ & $Y_{i1}(1)$  \\
		
		$A_i=1$, $D_i=1$ & $Y_{i0}(1,1)$ & $Y_{i1}(1)$  \\
		\hline
	\end{tabular}
\end{center}
Compared with the commonly used DID framework, the critical difference is that the observed pre-treatment outcome $Y_{i0}$ for the treated group is a mixture of the potential outcomes for those who do not anticipate $Y_{i0}(0,1)$ and those who do anticipate $Y_{i0}(1,1)$ in the treated group in the pre-treatment period. $\mathbb{E}[Y_{i0}|D_i=1]$ is no longer a good measure for the first-period potential outcome without treatment for the treated group. 

Under this setup, the parameter of interest I focus on is still the average treatment effect for treated (ATT) with a slight modification:
\[\mu_g=\mathbb{E}[g(Y_{i1}(1))-g(Y_{i1}(0))|D_i=1],\]
where $g(.)$ is a known measurable real function with $\mathbb{E}\lvert g(Y)\rvert<\infty$. Define the corresponding anticipatory effect for anticipators as \[\tau_g=\mathbb{E}[g(Y_{i0}(1,1))-g(Y_{i0}(0,1))|D_i=1,A_i=1].\]
The $g(.)$ function is slightly generalized from the commonly defined ATT. When $g(.)$ is the identity function, $\mu_g$ is the widely used ATT. If $g(.)$ is an indicator function such as $g_{u}(Y)=\mathbb{I}(Y\le u)$, then $\mu_g$ can be interpreted as the change in the probability of the outcomes being no more than a specific cutoff $u$ and can be used to help identify the distribution of potential outcomes. Different choices of this $g(.)$ function lead to different estimators. Introducing the $g$ function enables handling of some nonlinear structures for parameters I am interested in. For simplicity of notation, I write $\mu_g$ as $\mu$ and $\tau_g$ as $\tau$ when $g(.)$ is the identity function.
\begin{assumption}[Parallel Trends]
	\label{paralleltrends}
	\begin{eqnarray*} 
		\mathbb{E}[g(Y_{i1}(0))-g(Y_{i0}(0))|D_i=1]&=&\mathbb{E}[g(Y_{i1}(0))-g(Y_{i0}(0))|D_i=0].
	\end{eqnarray*}
\end{assumption}

Although the expression seems to be the same as the parallel trends assumption in the canonical DID model, Assumption \ref{paralleltrends} requires that the treatment and control group change following parallel trends before and after the treatment in the absence of both anticipation and the treatment.
\begin{remark}
	As explained above, for the benchmark model and basic results, I focus only on the pre-treatment anticipatory behavior in the treated group. For the anticipatory behavior in the control group, one can understand this simplification as an implicit assumption that all people anticipate their future correctly, so that if people anticipate no future treatment, they have no incentive to change their behavior. Whether one anticipates the future and realizes no treatment or no anticipation will not make any difference. Further modifications to incorporate anticipatory behavior in the control group and the post-treatment effect of anticipatory behavior will be discussed later.
\end{remark}

Then, I briefly discuss what the commonly used DID estimator estimates without considering anticipation and compare the finding with the parameter that I am interested in. Throughout this section I choose $g(.)$ to be the identity function.

In the DID regression model with two periods,
\[Y_{it}=\beta_0+\beta_1 t+\beta_2 D_i+\beta_3 tD_i+\varepsilon_{it}.\]
Under Assumptions \ref{sampling}-\ref{paralleltrends}, the coefficient of interest, $\beta_3$, can be written as
\begin{eqnarray*}
	\beta_3 &=&\mathbb{E}[Y_{i1}|D_i=1]-\mathbb{E}[Y_{i0}|D_i=1]-\mathbb{E}[Y_{i1}|D_i=0]+\mathbb{E}[Y_{i0}|D_i=0]\\	
	&=& \mu-\mathbb{P}[A_i=1|D_i=1]\tau\\
	&=& \mathbb{P}[A_i=1|D_i=1](\mu-\tau)+(1-\mathbb{P}[A_i=1|D_i=1])\mu.
\end{eqnarray*}

If the DID estimator is used directly, it will suffer from a bias equal to the average distortion caused by anticipation. This bias arises from the fact that the observable pre-treatment outcomes for the treated group do not reflect the potential outcomes for them without the treatment. Those who anticipate have already reacted in the first period and deviated from the parallel-trends benchmark. Thus, applying the DID estimator directly suffers from a bias determined by both the proportion of those who anticipate and the magnitude of anticipatory effects. The last equality points out that this parameter can also be written as a weighted average of the treatment effect $\mu$ for those who do not anticipate and the net treatment effect after the adoption of the policy for those who do anticipate $\mu-\tau$. In general, the relationship between this estimand and the treatment effect depends on the sign of the anticipatory effects. Suppose the treatment and anticipatory effects have the same sign. In that case, anticipation will drive the DID estimator toward zero relative to the treatment effect because of contamination. The graph below captures the idea of this distortion. Figure \ref{DID graph} shows the result obtained by applying the DID estimator directly, whereas Figure \ref{anticipation graph} describes the situation that considers anticipation.  Anticipation causes the distortion between $\beta_3$ and $\mu$.		
\begin{landscape}	
	\begin{figure}[bp]
		\centering
		\begin{minipage}[b]{0.7\textwidth}
			\centering
			\begin{tikzpicture}
				\draw[thick,->] (0,0) -- (7,0) node[anchor=north west] {x};
				\draw[thick,->] (0,0) -- (0,7) node[anchor=east] {y};
				\filldraw[black] (1,1.5) circle (2pt);
				\filldraw[black] (5,3) circle (2pt);
				\filldraw[black] (5,6.5) circle (2pt);
				\filldraw[black] (1,4) circle (2pt);
				\draw[gray, thick] (1,4) -- (5,6.5);
				\draw[gray, thick] (1,1.5) -- (5,3);
				\draw (1,1pt) -- (1,-1pt) node[anchor=north] {t=0};
				\draw (5,1pt) -- (5,-1pt) node[anchor=north] {t=1};
				\draw (1pt,1.5) -- (-1pt,1.5) node[anchor=east] {$\mathbb{E}[Y_{i0}|D_i=0]$};
				\draw (1pt,3) -- (-1pt,3) node[anchor=east] {$\mathbb{E}[Y_{i1}|D_i=0]$}; 
				\draw (1pt,4) -- (-1pt,4) node[anchor=east] {$\mathbb{E}[Y_{i0}|D_i=1]$};
				\draw (1pt,6.5) -- (-1pt,6.5) node[anchor=east] {$\mathbb{E}[Y_{i1}|D_i=1]$};
				\draw[dashed](0,6.5) -- (5,6.5);
				\draw[dashed](0,4) -- (1,4);
				\draw[dashed](0,3) -- (5,3);
				\draw[dashed](0,1.5) -- (1,1.5);
				
				\draw[black] (5,5.5) circle (2pt);
				\draw[dashed](1,4) -- (5,5.5);
				\draw[decorate,decoration={brace,amplitude=5pt,mirror,raise=4pt},yshift=0pt]
				(5,5.5) -- (5,6.5)node [black,midway,xshift=2cm]{\footnotesize
					DID estimator};
			\end{tikzpicture}		
			\caption{Difference-In-Differences without Anticipation}
			\label{DID graph}
		\end{minipage}
		\begin{minipage}[b]{0.7\textwidth}
			\centering
			\begin{tikzpicture}
				\draw[thick,->] (0,0) -- (7,0) node[anchor=north west] {x};
				\draw[thick,->] (0,0) -- (0,7) node[anchor=east] {y};
				\filldraw[black] (1,1.5) circle (2pt);
				\filldraw[black] (5,3) circle (2pt);
				\filldraw[black] (5,6.5) circle (2pt);
				\filldraw[black] (1,4) circle (2pt);
				\draw[gray, thick] (1,4) -- (5,6.5);
				\draw[gray, thick] (1,1.5) -- (5,3);
				\draw (1,1pt) -- (1,-1pt) node[anchor=north] {t=0};
				\draw (5,1pt) -- (5,-1pt) node[anchor=north] {t=1};
				\draw (1pt,1.5) -- (-1pt,1.5) node[anchor=east] {$\mathbb{E}[Y_{i0}|D_i=0]$};
				\draw (1pt,3) -- (-1pt,3) node[anchor=east] {$\mathbb{E}[Y_{i1}|D_i=0]$}; 
				\draw (1pt,4) -- (-1pt,4) node[anchor=east] {$\mathbb{E}[Y_{i0}|D_i=1]$};
				\draw (1pt,6.5) -- (-1pt,6.5) node[anchor=east] {$\mathbb{E}[Y_{i1}|D_i=1]$};
				\draw[dashed](0,6.5) -- (5,6.5);
				\draw[dashed](0,4) -- (1,4);
				\draw[dashed](0,3) -- (5,3);
				\draw[dashed](0,1.5) -- (1,1.5);
				
				\draw[black] (1,3.5) circle (2pt);
				\draw[black] (5,5) circle (2pt);
				\draw[dashed](1,3.5) -- (5,5);
				\draw(1pt,5) -- (-1pt,5) node[anchor=east] {$\mathbb{E}[Y_{i1}(0)|D_i=1]$};
				\draw (1pt,3.5) -- (-1pt,3.5) node[anchor=east] {$\mathbb{E}[Y_{i0}(0)|D_i=1]$};
				\draw[dashed](0,3.5) -- (1,3.5);
				\draw[decorate,decoration={brace,amplitude=5pt,mirror,raise=4pt},yshift=0pt]
				(5,5) -- (5,6.5) node [black,midway,xshift=2.7cm] {\footnotesize
					$\mu=\mathbb{E}[Y_{i1}(1)-Y_{i1}(0)|D_i=1]$};
				\draw [decorate,decoration={brace,amplitude=3pt,mirror,raise=4pt},yshift=0pt]
				(1,3.5) -- (1,4);
				\draw[-stealth](1.4,3.75) -- (1.65,3.5);
				\draw(1.65cm-1pt,3.5) -- (1.65cm+1pt,3.5) node[anchor=west] {$\mathbb{P}[A_i=1|D_i=1]\tau$};
				\draw[dashed](0,5) -- (5,5);	
			\end{tikzpicture}
			\caption{With Anticipation}
			\label{anticipation graph}
		\end{minipage}
	\end{figure}
\end{landscape}
\section{Upper and Lower Bounds for Treatment Effects}
\subsection{Main Results}
Anticipation makes the commonly used DID estimator a mixture of anticipatory and treatment effects. The fundamental difficulty in obtaining identification is distinguishing between those who anticipate and those who do not. This section introduces several assumptions to build upper and lower bounds for treatment effects under different circumstances. Motivations for specific assumptions are provided. The following results link observed outcomes, potential outcomes, treatment assignments, and anticipation status and are used in further discussions.

The analysis above shows that two unobservable variables are contributing to the pre-treatment distortion. One is the possibility of treated units anticipating, $\mathbb{P}[A_i=1|D_i=1]$, and the other is the anticipatory effect for anticipators $\tau_g$. These two variables both need to be analyzed to recover the treatment effect. If a reasonable proxy is available for the anticipation treatment, one can use this proxy variable to measure the anticipation status for each unit. However, such a proxy variable is not always available. To overcome the difficulty of not being able to distinguish people who anticipate from others, I introduce a bounding parameter $\pi\in(0,1)$ that summarizes how the anticipation probability can be bounded. Here, $\pi$ is a parameter that can be obtained from available information, including the treatment assignment and outcomes of units, and it can be either constant or at least estimated from observable terms. Researchers can choose a $\pi$ based on their empirical setups, and I discuss possible choices of $\pi$ in the next section.
\begin{assumption}
	\label{prob_bound}
	$\mathbb{P}[A_i=1|D_i=1]\le \pi$.
\end{assumption}

Although I cannot observe the anticipatory effect $\tau_g$ directly,  I can build a relationship between it and the treatment effect for treated $\mu_g$. $\tau_g$ is caused by people's anticipation of a possible future treatment and behavior before the treatment to gain benefit. People's reactions and behavior are guided by their own guesses of the future policy. On the other hand, $\mu_g$ measures the treatment effect treated people receive when the policy is adopted. This effect happens based on revealed policy and treatment status. For example, suppose someone is going to sell his property at a lower-than-usual price because of anticipation of a possible negative price shock in the future. In this case, he has no reason to accept a price that is even lower than the price when the shock comes. Therefore, one might reasonably expects that the magnitude of treatment effects should be no smaller than that of the anticipatory effect, because the former is a reaction based on known information, whereas the latter one is based on uncertainty. In the example of the early retirement incentive program, this statement requires that the effect caused by teachers who exert less effort because they anticipate a possible early-retirement opportunity is no larger than the treatment effect when the early retirement incentive program is implemented. Because the anticipatory effect and treatment effect may not be in the same direction, I impose restrictions only on the magnitude.
\begin{assumption}
	\label{magnitude}
	$\abs{\tau_g}\le\abs{\mu_g}$
\end{assumption}

Assumptions \ref{prob_bound} and \ref{magnitude} help build bounds for the two unobservable terms, the proportion of anticipators among treated and the magnitude of the anticipatory effect separately based on the above assumptions. The parameter of interest $\mu_g$ is partially identified using observed variables, especially with the help of the commonly used DID estimand.
\begin{theorem}
	Under Assumptions \ref{sampling}-\ref{paralleltrends} and \ref{prob_bound}-\ref{magnitude}, the parameter of interest $\mu_g$ is partially identified via a closed interval in the following form:
	\[\mu_g\in m_g\left[\min\left\{1,\frac{1}{1-\text{sgn}(\tau_g\mu_g)\pi}\right\},\max\left\{1,\frac{1}{1-\text{sgn}(\tau_g\mu_g)\pi}\right\}\right]\]
	with $m_g=\mathbb{E}[g(Y_{i1})-g(Y_{i0})|D_i=1]-\mathbb{E}[g(Y_{i1})-g(Y_{i0})|D_i=0]$.
	\label{chap2:benchmark}
\end{theorem}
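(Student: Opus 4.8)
The plan is to reduce the theorem to a purely algebraic statement by first pinning down the exact relationship between the observed DID estimand $m_g$ and the two unobservable quantities $\mu_g$ and $\tau_g$, and then to solve the resulting constrained feasibility problem for $\mu_g$.

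First I would establish the identity
\[
m_g = \mu_g - \mathbb{P}[A_i=1\mid D_i=1]\,\tau_g,
\]
the general-$g$ analogue of the $\beta_3$ decomposition already displayed in the text for the identity function. The derivation is mechanical: using Assumption \ref{sampling} I replace each conditional expectation of observed outcomes by expectations of potential outcomes. In the treated group the observed pre-period outcome $g(Y_{i0})$ is the mixture $\mathbb{P}[A_i=1\mid D_i=1]\,g(Y_{i0}(1,1))+\mathbb{P}[A_i=0\mid D_i=1]\,g(Y_{i0}(0,1))$, while Assumption \ref{onlychannel} identifies the no-treatment pre-period outcome with $g(Y_{i0}(0,1))$. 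Subtracting the control-group change and invoking the parallel-trends condition (Assumption \ref{paralleltrends}) cancels the common $g(Y_{i1}(0))-g(Y_{i0}(0))$ term, leaving exactly $\mu_g$ minus the fraction-weighted anticipatory gap $\mathbb{P}[A_i=1\mid D_i=1]\,\tau_g$.

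Writing $p=\mathbb{P}[A_i=1\mid D_i=1]$ and rearranging yields $\mu_g-m_g=p\,\tau_g$, at which point the two restrictions enter: Assumption \ref{prob_bound} together with non-negativity of a probability gives $p\in[0,\pi]$, and Assumption \ref{magnitude} gives $\abs{\tau_g}\le\abs{\mu_g}$. The cleanest route is to divide through by $\mu_g$ (treating $\mu_g=0$, which forces $\tau_g=0$ and hence $m_g=0$, as a trivial degenerate case separately) and set $r=\tau_g/\mu_g$, so that $\abs{r}\le 1$ and $\text{sgn}(r)=\text{sgn}(\tau_g\mu_g)$. Then $m_g/\mu_g=1-pr$, i.e.\ $\mu_g/m_g=1/(1-pr)$. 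Since $p\ge 0$, the product $pr$ carries the sign of $\tau_g\mu_g$ and satisfies $\abs{pr}\le\pi$; hence when $\text{sgn}(\tau_g\mu_g)=+1$ one has $pr\in[0,\pi]$ and $\mu_g/m_g\in[1,\,1/(1-\pi)]$, while when $\text{sgn}(\tau_g\mu_g)=-1$ one has $pr\in[-\pi,0]$ and $\mu_g/m_g\in[1/(1+\pi),\,1]$. Both cases are summarised by $\mu_g\in m_g\,[\min\{1,1/(1-\text{sgn}(\tau_g\mu_g)\pi)\},\max\{1,1/(1-\text{sgn}(\tau_g\mu_g)\pi)\}]$, the factor-$m_g$ form automatically accommodating the sign of $m_g$.

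Finally I would argue sharpness, namely that the displayed interval is exactly the identified set rather than merely a superset: each endpoint is attained by an admissible $(p,\tau_g)$ pair — the endpoint $\mu_g=m_g$ via $p=0$, and the endpoint $\mu_g=m_g/(1-\text{sgn}(\tau_g\mu_g)\pi)$ via $p=\pi$ together with $\abs{\tau_g}=\abs{\mu_g}$ — and every interior point via an intermediate choice of $p$. The main obstacle I anticipate is not the decomposition, which is routine, but the sign bookkeeping in this last step: one must track the three interacting signs of $\mu_g$, $\tau_g$, and $m_g$ to verify that the single $\min$/$\max$ expression reproduces every case, and confirm feasibility of the boundary points so that the interval is genuinely sharp.
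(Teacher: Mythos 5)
Your proposal is correct and takes essentially the same route as the paper's proof: both hinge on the decomposition $\mu_g = m_g + \mathbb{P}[A_i=1\mid D_i=1]\,\tau_g$, obtained by writing the treated pre-period mean as a mixture over anticipation status and cancelling the common trend via Assumptions \ref{onlychannel} and \ref{paralleltrends}, and then bound the correction term using Assumptions \ref{prob_bound} and \ref{magnitude}. Your substitution $r=\tau_g/\mu_g$ merely unifies the paper's case-by-case inequality chains (the paper works out $0\le\tau_g\le\mu_g$ and declares the other sign cases analogous), and your sharpness step is an addition beyond the paper, whose proof establishes only containment of $\mu_g$ in the stated interval.
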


Theorem \ref{chap2:benchmark} points out that the treatment effect is located in an interval where the DID parameter without anticipation is one of its bounds. The other bound is obtained by enlarging or reducing it by a specific ratio depending on the bounding parameter $\pi$ and signs of treatment and anticipation effects. As shown in Figures \ref{DID graph} and \ref{anticipation graph}, the distortion happens only within the group of treated and anticipate units, so once the sign of the anticipatory effect is determined, the sign of the bias is also determined. The DID parameter without anticipation is by design one side of the interval. If anticipatory behavior happens in both the control and the treated group, distortions happen in both groups, and the sign of the bias is ambiguous. The distortion bias has a limited magnitude restricted by both the bounding parameter $\pi$ and the treatment effect magnitude, so I can build partial identification results for the parameter of interest based on observables.

Although I impose the bounds for anticipation probability and magnitude restrictions, these assumptions are not the only way to build partial identification results for the parameter of interest with anticipation. Empirical setups may exist in which these assumptions are not reasonable, and researchers would like to impose alternative assumptions, such as bounded outcomes or further conditional independent restrictions. These assumptions are also reasonable under specific situations, such as when the $g(.)$ function I am interested in is bounded by itself. I do not argue that the identified set under one assumption is tighter than the other, so I should choose one of them; rather, different sets of assumptions may be reasonable under different empirical circumstances. Incorporating more combinations of alternative assumptions and providing identification results allow us to incorporate anticipation in more situations and give researchers the freedom to modify assumptions based on the empirical setup. I propose several different combinations of assumptions as well as corresponding upper and lower bounds expressions of the treatment effects in the appendix.
\subsection{Choice of $\pi$}
This section discusses several possible choices of the bounding parameter $\pi$ for the anticipation probability among treated units under different setups.

\textbf{Example 1} $\pi=\pi_0$, where $\pi_0$ is a constant number. If $\pi$ is a constant number, a common upper bound exists for the possibility of anticipation. This choice of $\pi$ may be consistent with the setup where people get treated randomly and receive private information that helps with anticipation. Then, the overall anticipation possibility should be no more than the proportion of people who have access to this private information.

\textbf{Example 2} $\pi=\mathbb{P}[D_i=1]$.
This example states that the possibility of people within the treatment group anticipating does not exceed the proportion of people treated at last. This argument follows the idea that anticipation will happen when a future treatment sends some signals and unobservable information in advance. These signals are the bases for someone anticipating a treatment. Suppose the density of these signals caused by future adoptions of policies is related to the overall scope of the treatment. In that case, using the treated probability to help bound the proportion of people who anticipate is reasonable. In the appendix, I explain this choice and corresponding assumptions in a model where people anticipate from public information.

\textbf{Example 3} The univariate bound can be modified to incorporate the idea of stratification. Suppose researchers are willing to divide units into several subgroups and allow anticipation behavior to differ among subgroups. In that case, I can choose $\pi$ as a $k$-dimensional vector if I have $k$ subgroups in total. For instance, the vector of assignment can be summarized according to genders or geographical areas, and researchers can get a bound separately for each subgroup. This choice of $\pi$ can also be regarded as a bound conditional on a discrete variable that divides the group based on several categories, and can link to the case with covariates.

\textbf{Example 4} Suppose anticipation behavior happens among known reference groups for each unit, as mentioned in \cite{manski2013identification}. In that case, I can choose $\pi$ based on subgroup information. For example, if researchers would like to use the treatment ratio to capture the density of information, and on the other hand, they also believe this kind of interaction only happens among units within a specific geographical distance, they can choose $\pi$ as the treatment ratio for each subgroup defined by the given geographical distance.

Additionally the choice of $\pi$ can also play the role of sensitivity analysis. The expression of the bounds should be monotonic in $\pi$, and researchers can use different choices of $\pi$ to explore the robustness of obtained conclusions by checking the specific cutoff under which the consistency of conclusion can be maintained. This sensitivity analysis also helps us understand to what extent the conclusion depends on the choice of bounds, and researchers can report the range of anticipation probability that rejects a particular null hypothesis.
\section{Estimation and Inference}
The previous section illustrates that by using a DID approach, the treatment effect for treated with anticipation is partially identified under certain assumptions. The population average expressions of the interval bounds lead to straightforward estimators using sample means under independent assumptions. This section builds uniformly effective confidence sets for the partially identified parameters.

Assume researchers observe data from a distribution $\text{P}\in\mathbf{P}$ with the unobservable parameter, $\mathbb{P}[A_i=1|D_i=1]\in[0,\pi]$. $\mathbf{P}$ refers to the family of distributions that satisfy the sampling, potential outcomes restrictions. For the inferential goal under partial identification, I build a confidence set that is uniformly consistent in level $\alpha$, namely
\[\lim_{n\to\infty}\inf_{\text{P}\in\mathbf{P},\mathbb{P}[A_i=1|D_i=1]\in[0,\pi]}\mathbb{P}[\mu_g\in CS_{\alpha}^{\mu}]\ge\alpha,\]
where $CS_{\alpha}^{\mu}$ is the $\alpha$ level confidence set for the parameter of interest $\mu_g$.

Here, I provide confidence sets based on \cite{imbens2004confidence}, \cite{stoye2009more}, and \cite{stoye2020simple}. The upper and lower bounds for the identified set are estimated using the same sample and are thus highly correlated. I can build an easy-to-implement confidence set by modifying the method addressed above. For notation simplicity, refer to the upper and lower bounds of parameter $\mu_g$ as $\mu_{g,u}$ and $\mu_{g,l}$. A uniformly effective confidence set can be built if the corresponding estimators $\hat{\mu}_{g,u}$ and $\hat{\mu}_{g,l}$ exist and satisfy the following assumptions
\begin{assumption}
	\label{asymnorm}
	$\sqrt{n}\left(\begin{array}{c}
		\hat{\mu}_{g,l}-\mu_{g,l} \\ 
		\hat{\mu}_{g,u}-\mu_{g,u}
	\end{array}\right)\xrightarrow{d}\mathcal{N}\left( \left[ 
	\begin{array}{c}
		0\\ 
		0 %
	\end{array}%
	\right] ,\left[ 
	\begin{array}{cc}
		\sigma _{l}^{2} & \rho \sigma _{l}\sigma _{u} \\ 
		\rho \sigma _{l}\sigma _{u} & \sigma _{u}^{2}%
	\end{array}%
	\right] \right)$ uniformly in $\text{P}\in\mathbf{P}$, and estimators $(\hat{\sigma}_l^{2},\hat{\sigma}_u^{2},\hat{\rho})$ converge to their population values uniformly in $\text{P}\in\mathbf{P}.$
\end{assumption}
\begin{assumption}
	\label{finitevariance}
	For all $\text{P}\in\mathbf{P}$, $\underline{\sigma}^{2}\le\sigma_l^{2},\sigma_{u}^{2}\le\bar{\sigma}^{2}$ for some positive and finite $\underline{\sigma}^{2}$ and $\bar{\sigma}^{2}$ and $\mu_{g,u}-\mu_{g,l}=\Delta\le\bar{\Delta}<\infty$.
\end{assumption}
\begin{theorem}	\label{infer}
	Under Assumption \ref{asymnorm} and \ref{finitevariance}, define $\hat{\sigma}=\max\{\hat{\sigma}_l,\hat{\sigma}_u\}$ and  find $C_n$ that satisfies
	\[\Phi\left(C_n+\sqrt{n}\frac{\hat{\mu}_{g,u}-\hat{\mu}_{g,l}}{\hat{\sigma}}\right)-\Phi(-C_n)=\alpha.\]
	$\Phi$ represents the cumulative distribution function for standard normal distribution.\\
	Then, I have	
	\[\lim_{n\to\infty}\inf_{\text{P}\in\mathbf{P},\mathbb{P}[A_i=1|D_i=1]\in[0,\pi]}\mathbb{P}\left(\mu_g\in \left[\hat{\mu}_{g,l}-C_n\frac{\hat{\sigma}}{\sqrt{n}},\hat{\mu}_{g,u}+C_n\frac{\hat{\sigma}}{\sqrt{n}}\right]\right)\ge\alpha.\]
\end{theorem}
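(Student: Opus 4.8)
The plan is to follow the Imbens--Manski/Stoye strategy adapted to the pooled scale $\hat\sigma=\max\{\hat\sigma_l,\hat\sigma_u\}$, and to argue along drifting subsequences so as to secure uniformity. Since the conclusion concerns $\lim_{n}\inf_{\mathrm P,\,\mathbb P[A_i=1\mid D_i=1]}\mathbb P(\mu_g\in CS^\mu_\alpha)$, I would first invoke the subsequence characterization of uniform convergence: it suffices to take an arbitrary sequence $\{\mathrm P_n\}\subset\mathbf P$ attaining the liminf, with the true $\mu_{g,n}$ an arbitrary point of its identified interval $[\mu_{g,l,n},\mu_{g,u,n}]$ (the nuisance $\mathbb P[A_i=1\mid D_i=1]\in[0,\pi]$ only fixes where $\mu_{g,n}$ sits there), and to show $\liminf_n\mathbb P_{\mathrm P_n}(\text{coverage})\ge\alpha$ uniformly over that location. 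Passing to a further subsequence, Assumption~\ref{finitevariance} keeps $\sigma_{l,n},\sigma_{u,n}\in[\underline\sigma,\bar\sigma]$ and $\Delta_n=\mu_{g,u,n}-\mu_{g,l,n}\in[0,\bar\Delta]$, so the standardized length $\Delta_n^\ast=\sqrt n\,\Delta_n/\sigma_n$ converges to some $\delta\in[0,\infty]$, the ratios $\sigma_{l,n}/\sigma_n\to r_l\le1$ and $\sigma_{u,n}/\sigma_n\to r_u\le1$ with $\max\{r_l,r_u\}=1$, and $\rho_n\to\rho^\ast$, where $\sigma_n=\max\{\sigma_{l,n},\sigma_{u,n}\}$.

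Next I would reduce coverage to a two-dimensional Gaussian tail bound. Writing $\tilde Z_{l,n}=\sqrt n(\hat\mu_{g,l}-\mu_{g,l,n})/\sigma_n$, $\tilde Z_{u,n}=\sqrt n(\hat\mu_{g,u}-\mu_{g,u,n})/\sigma_n$ and $\lambda_n=\sqrt n(\mu_{g,n}-\mu_{g,l,n})/\sigma_n\in[0,\Delta_n^\ast]$, the set misses $\mu_{g,n}$ from below exactly when $\tilde Z_{l,n}>\lambda_n+C_n\hat\sigma/\sigma_n$ and from above exactly when $\tilde Z_{u,n}<\lambda_n-\Delta_n^\ast-C_n\hat\sigma/\sigma_n$. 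A union bound then gives
\[
\mathbb P(\text{non-coverage})\le\mathbb P\!\left(\tilde Z_{l,n}>\lambda_n+C_n\frac{\hat\sigma}{\sigma_n}\right)+\mathbb P\!\left(\tilde Z_{u,n}<\lambda_n-\Delta_n^\ast-C_n\frac{\hat\sigma}{\sigma_n}\right),
\]
which is the convenient direction for a lower coverage bound, the discarded intersection only helping.

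To see why the pooled scale delivers validity, first pretend the nuisance quantities entering $C_n$ are known, so that $C_n$ is deterministic. Because $\hat\sigma/\sigma_n\to1$, $\tilde Z_{l,n}$ is asymptotically $N(0,r_l^2)$ with $r_l\le1$ and the cutoff $\lambda_n+C_n$ is positive; since a centered normal with smaller variance has a thinner tail beyond a positive cutoff, $\mathbb P(\tilde Z_{l,n}>\lambda_n+C_n)\le\Phi(-(\lambda_n+C_n))+o(1)$. Symmetrically, $\lambda_n-\Delta_n^\ast-C_n\le0$ and $r_u\le1$ give $\mathbb P(\tilde Z_{u,n}<\lambda_n-\Delta_n^\ast-C_n)\le\Phi(\lambda_n-\Delta_n^\ast-C_n)+o(1)$. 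Hence non-coverage is bounded by the equal-variance benchmark $f_n(\lambda_n):=\Phi(-(\lambda_n+C_n))+\Phi(\lambda_n-\Delta_n^\ast-C_n)$. A one-line calculus check shows $f_n$ decreases then increases on $[0,\Delta_n^\ast]$, attaining its maximum at the endpoints, where it equals $\Phi(-C_n)+\Phi(-C_n-\Delta_n^\ast)$; rewriting the defining equation of $C_n$ as $\Phi(-C_n)+\Phi(-C_n-\Delta_n^\ast)=1-\alpha$ pins this value down to exactly $1-\alpha$, so coverage is at least $\alpha$ for every admissible location $\lambda_n$.

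The main obstacle is that $C_n$ is not deterministic: it is defined through $\hat\mu_{g,u}-\hat\mu_{g,l}$ and $\hat\sigma$ rather than through $\Delta_n$ and $\sigma_n$, so in the near-point-identified regime $\delta<\infty$ the argument $\sqrt n(\hat\mu_{g,u}-\hat\mu_{g,l})/\hat\sigma=\Delta_n^\ast+(\tilde Z_{u,n}-\tilde Z_{l,n})+o_p(1)$ carries an $O_p(1)$ term, making $C_n$ random and correlated with the very Gaussians appearing in the tail events above. Consequently the clean step $\mathbb P(\tilde Z_{l,n}>\lambda_n+C_n)\le\Phi(-(\lambda_n+C_n))$ cannot be read off by integrating out $\tilde Z_{l,n}$ with $C_n$ held fixed. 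I would resolve this as in \cite{stoye2009more} and \cite{stoye2020simple}: along the subsequence, Assumption~\ref{asymnorm} yields joint weak convergence of $(\tilde Z_{l,n},\tilde Z_{u,n})$ and, by the continuous mapping theorem applied to the smooth strictly monotone equation defining $C_n$, of $C_n$ itself; one then verifies that the limiting coverage event has probability at least $\alpha$ for every $\delta\in[0,\infty]$ and every admissible $(r_l,r_u,\rho^\ast,\lambda^\ast)$, the degenerate boundaries ($\delta=\infty$ collapsing $C_n$ to the one-sided critical value, $\delta=0$ to the two-sided one) being treated separately, while the uniform variance bounds of Assumption~\ref{finitevariance} keep the limit experiment non-degenerate. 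Establishing that this limiting probability is never below $\alpha$ \emph{despite} the dependence between $C_n$ and the tail events, rather than relying on the heuristic fixed-$C_n$ computation above, is the crux of the argument.
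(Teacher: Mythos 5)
You correctly reproduce the fixed-$C_n$ Imbens--Manski calculation (union bound, monotonicity of $f_n$ on $[0,\Delta_n^\ast]$, endpoint maximum matching the defining equation of $C_n$), and you correctly identify that the real difficulty is the randomness of $C_n$ through $\hat{\mu}_{g,u}-\hat{\mu}_{g,l}$ and $\hat{\sigma}$ and its correlation with the tail events. But your proposal stops exactly at that difficulty: the sentence ``one then verifies that the limiting coverage event has probability at least $\alpha$'' is a citation to \cite{stoye2009more} and \cite{stoye2020simple}, not a proof, and at the level of generality you set up (arbitrary $r_l$, $r_u$, $\rho^\ast$) the assertion is not true in general --- the counterexamples in \cite{stoye2009more} show that the Imbens--Manski construction can undercover precisely because the estimated width feeds into $C_n$, unless the bound estimators are ordered or the width is estimated superefficiently. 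You verify neither condition, and in fact the labeled ordering $\hat{\mu}_{g,l}\le\hat{\mu}_{g,u}$ fails here whenever $\hat{m}_g<0$, since $\hat{\mu}_{g,l}=\hat{\mu}_{g,u}/(1+\hat{\pi})$. So the step you yourself call ``the crux of the argument'' is left open, and the route you sketch for it may not close as stated.

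The paper closes this gap not via a limit-experiment taxonomy over $(\delta,r_l,r_u,\rho^\ast)$ but by exploiting structure your generic two-dimensional framework discards: both bounds are deterministic proportional transforms of the single estimator $\hat{m}_g$, so the correlation is exactly one and the coverage event is one-dimensional in $\sqrt{n}(m_g-\hat{m}_g)/\sigma$, with the nuisance reduced to the scalar $\lambda=\mathbb{P}[A_i=1|D_i=1]\in[0,\pi]$ and the true parameter written as $m_g/(1+\lambda)$. After a Berry--Esseen approximation the coverage probability is monotone in $\lambda$, the worst case is $\lambda=0$, and there the defining equation for $C_n$ delivers exactly $\alpha$; the randomness of $C_n$ is absorbed inside this single-Gaussian computation rather than handled by continuous mapping in a joint limit. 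Moreover, the case the paper's proof actually labors over --- $m_g<0$ but $\hat{m}_g>0$, where the identified interval reverses orientation from $[m_g/(1+\pi),\,m_g]$ to $[m_g,\,m_g/(1+\pi)]$ and the estimator labeled as the upper bound is estimating the lower bound --- is the entire reason Theorem \ref{infer} uses the pooled scale $\hat{\sigma}=\max\{\hat{\sigma}_l,\hat{\sigma}_u\}$ rather than the bound-specific standard errors of \cite{imbens2004confidence}. Your proposal never engages with this sign-flip, so it cannot explain why the pooled scale is needed. To repair your argument you would either have to specialize the limit experiment to the perfectly correlated proportional structure and verify coverage there directly --- which essentially reproduces the paper's computation --- or establish one of Stoye's sufficient conditions, which fails as stated.
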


To be consistent with the setup in the empirical application, I analyze the case $\tau_g\le0\le\mu_g$ as an example, and I have $\mu_g\in m_g\left[\frac{1}{1+\pi},1\right]$. 
Corresponding bound estimators will be
\begin{eqnarray*}
	\hat{\mu}_{g,u}&=&\frac{1}{n_1}\sum_{i=1}^{n}[g(Y_{i1})-g(Y_{i0})]D_i-\frac{1}{n_0}\sum_{i=1}^{n}[g(Y_{i1})-g(Y_{i0})](1-D_i)\qquad\\ 
	\hat{\mu}_{g,l}&=&\frac{\hat{\mu}_{g,u}}{1+\hat{\pi}}\qquad n_1=\sum_{i=1}^{n}D_i\qquad n_0=n-n_1\qquad\hat{\pi}\ \text{is a consistent estimator for $\pi$.}
\end{eqnarray*}
If one would like to choose $\pi=\mathbb{P}[D_i=1]$, a straightforward $\hat{\pi}$ will be $\frac{1}{n}\sum_{i=1}^{n}D_i$. The standard errors can be found in the supplemental appendix.

The assumptions and results mainly follow \cite{imbens2004confidence} and \cite{stoye2009more}. When compared with the \cite{imbens2004confidence} approach, the confidence set I construct is slightly different in that I choose to extend along with the upper and lower bounds by the same length $C_n\frac{\hat{\sigma}}{\sqrt{n}}$ where \cite{imbens2004confidence} choose the same critical value but the standard errors are different. An intuitive explanation is that although the estimators of the upper and lower bounds are ordered by construction, which is an important assumption mentioned in \cite{stoye2009more}, the upper and lower bounds can have the reverse order and the interval changes from $[\frac{m_g}{1+\pi},m_g]$ to $[m_g,\frac{m_g}{1+\pi}]$ when the confidence set contains both positive and negative values. Therefore, the corresponding variances for the estimators of upper and lower bounds need to be accommodated to use the larger one for both bounds. This modification works for the construction of confidence sets with perfectly correlated and proportional upper and lower bounds, especially when the confidence set contains 0. The proof is discussed in the appendix.

The change in the expression of confidence sets changes the significance level of rejecting the specific null hypothesis, $H_0: \mu_g=0$, in many cases compared with the situation without anticipation. One interesting case worth mentioning happens when $\mu_g$ and $\tau_g$ have different signs. For the specific null hypothesis, $H_0: \mu_g=0$, I can calculate the values of t-statistics that guarantee the conclusion of whether rejecting it or not unchanged regardless of anticipation.
\begin{corollary}\label{null0}
	Suppose $t^{*}$ satisfies \[\Phi(t^{*})-\Phi(-t^{*}/2)=\alpha.\]
	$\Phi$ represents the cumulative distribution function for standard normal distribution, and the inferential goal is to test $H_0:\mu_g=0$ at level $\alpha$. Suppose confidence set $CS_{\alpha}^{\mu}$ is constructed following the procedure in Theorem \ref{infer}. If $\mu_g$ and $\tau_g$ have different signs, and the t-statistic from the DID model without anticipation $\tilde{t}$ satisfies $\lvert\tilde{t}\rvert>t^{*}$; then for any $\pi$, $0\not\in CS_{\alpha}^{\mu}$.
\end{corollary}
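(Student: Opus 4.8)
\textbf{The plan} is to specialize to the representative sign configuration and then reduce the event $\{0\notin CS_\alpha^\mu\}$ to a single scalar inequality between the DID $t$-statistic and the critical value $C_n$ from Theorem \ref{infer}. By the sign symmetry of the problem it suffices to treat the case analyzed in the text, $\tau_g\le 0\le\mu_g$; the opposite configuration $\mu_g\le 0\le\tau_g$ follows by replacing $g$ with $-g$ (equivalently, by running the same argument on the upper endpoint). In this case Theorem \ref{chap2:benchmark} gives the identified set $m_g[\frac{1}{1+\pi},1]$ with $m_g\ge 0$, so the estimated endpoints are $\hat\mu_{g,u}$, the ordinary DID estimate, and $\hat\mu_{g,l}=\hat\mu_{g,u}/(1+\hat\pi)$. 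Because the two bounds are proportional, I have $\hat\sigma_l=\hat\sigma_u/(1+\hat\pi)<\hat\sigma_u$, hence $\hat\sigma=\max\{\hat\sigma_l,\hat\sigma_u\}=\hat\sigma_u$, and the no-anticipation $t$-statistic is $\tilde t=\sqrt n\,\hat\mu_{g,u}/\hat\sigma_u$, which I take to be positive (the branch $\tilde t<0$ is symmetric).

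Next I would translate $0\notin CS_\alpha^\mu$ into an inequality. Since the upper endpoint $\hat\mu_{g,u}+C_n\hat\sigma/\sqrt n$ is positive, exclusion of $0$ is equivalent to the lower endpoint being positive, i.e. $\hat\mu_{g,l}>C_n\hat\sigma_u/\sqrt n$, which upon dividing through by $\hat\sigma_u/\sqrt n$ reads $\tilde t/(1+\hat\pi)>C_n$. I then set $s=\tilde t/(1+\hat\pi)$ and $\delta=\sqrt n(\hat\mu_{g,u}-\hat\mu_{g,l})/\hat\sigma_u=\tilde t\,\hat\pi/(1+\hat\pi)$, the quantity appearing inside the defining equation for $C_n$. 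The algebraic identity that drives everything is $s+\delta=\tilde t$.

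For the core step, recall that $C_n$ solves $F(c):=\Phi(c+\delta)-\Phi(-c)=\alpha$, and $F$ is strictly increasing ($F'(c)=\phi(c+\delta)+\phi(c)>0$), so $s>C_n$ is equivalent to $F(s)>\alpha$. Evaluating, $F(s)=\Phi(s+\delta)-\Phi(-s)=\Phi(\tilde t)-\Phi(-s)$. Since $\hat\pi<1$ forces $s>\tilde t/2$, I get $\Phi(-s)<\Phi(-\tilde t/2)$ and therefore $F(s)>\Phi(\tilde t)-\Phi(-\tilde t/2)$. Finally, the map $t\mapsto\Phi(t)-\Phi(-t/2)$ is strictly increasing ($\phi(t)+\tfrac12\phi(t/2)>0$), so $\tilde t>t^*$ together with $\Phi(t^*)-\Phi(-t^*/2)=\alpha$ gives $\Phi(\tilde t)-\Phi(-\tilde t/2)>\alpha$. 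Chaining these, $F(s)>\alpha=F(C_n)$, hence $s>C_n$ and $0\notin CS_\alpha^\mu$.

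\textbf{The main obstacle} I anticipate is not any one computation but getting the reduction exactly right: identifying $\hat\sigma=\hat\sigma_u$ via proportionality, and spotting the identity $s+\delta=\tilde t$ together with the bound $\tfrac{1}{1+\pi}>\tfrac12$ (from $\pi<1$), which is precisely why the factor $1/2$ appears in the definition of $t^*$. Everything else reduces to the two monotonicity facts above. The remaining care is bookkeeping for the reverse-ordering / opposite-sign branch flagged after Theorem \ref{infer}, where the labels of $\hat\mu_{g,u}$ and $\hat\mu_{g,l}$ swap and the same inequality must instead be applied to the upper endpoint.
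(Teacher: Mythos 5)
Your proposal is correct and takes essentially the same route as the paper's proof: both reduce $0\notin CS_{\alpha}^{\mu}$ to the scalar inequality $\tilde{t}/(1+\hat{\pi})>C_n$ (using $\hat{\sigma}=\hat{\sigma}_u$ by proportionality of the bounds), both exploit the strict monotonicity of $c\mapsto\Phi(c+\delta)-\Phi(-c)$ together with the defining equation for $C_n$, and both obtain the factor $1/2$ in the definition of $t^{*}$ from the worst case $\pi\to 1$. Your chain $F(s)=\Phi(\tilde{t})-\Phi(-s)>\Phi(\tilde{t})-\Phi(-\tilde{t}/2)>\alpha=F(C_n)$ is merely a slightly more formal rendering of the paper's identical argument.
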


Corollary \ref{null0} gives empirical researchers a specific cutoff $t^{*}$ for the most common case of testing $H_0: \mu_g=0$. If the absolute value of the t-statistic exceeds $t^{*}$ for the case of different signs, taking anticipation into consideration will not change the conclusion of rejecting the null hypothesis. For example, when $\alpha=0.95$, the corresponding $t^{*}$ is 3.3. Thus, if the absolute value of the t-statistic is larger than 3.3 without anticipation, the zero hypothesis for the treatment effect can still be rejected regardless of the anticipation probability when the treatment and anticipatory effects have different signs. This corollary gives empirical researchers a cutoff where they can claim the effectiveness of their conclusions even with anticipation as long as the t-statistic is large enough.
\section{Empirical Application}
In this section, I illustrate the results of this paper in the environment established by \cite{fitzpatrick2014early}, which analyzes the effects of an early retirement incentive program on students' achievement. The authors conducted a DID analysis using exogenous variations from the early retirement incentive (ERI) program targeting teachers in Illinois during the mid-1990s to evaluate the effect of large-scale teacher retirements on student achievement. The Teacher Retirement System in Illinois requires retired members who are at least 55 years old and have 20 years of service experience to collect pension benefits at a 6\% discount rate below age 60. If both the employer and employee pay a one-time fee, an Early Retirement Option allows eligible members to collect their full benefits. In 1992-1993 and 1993-1994, an early retirement incentive (ERI) program was offered as an alternative to ERO, which allowed employees to buy five extra years of age and experience as long as they retired immediately. This alternative allowed those at least 50 years old and with 15 years of service credit to increase their retirement benefits. 

Notably, the ERI programs may affect students' learning, because such programs might lead to a change in teachers' experience and age structure, which will eventually influence students' grades. In their paper, the authors used a DID approach to analyze how promoting the ERI program affected students' grades. They found no evidence of an adverse effect and even found a positive effect on grades in some circumstances. I analyze the average treatment effect for treated by taking anticipation into consideration. The outcome of interest is students' grades, and the major difference is teachers might now anticipate the program in advance and benefit from it.

The authors collected data from several sources. The Teacher Service Record is an administrative dataset that contains information on employees from Illinois Public Schools. The second set of data provides school-level information on test scores for given subjects and grades. The third source contains demographic information of students in schools. The analysis is restricted to teachers of third, sixth and eighth grades, because standardized testing in Illinois focuses on these grades. One major issue for the data is the ERI take-up is not observed directly. The authors exploited the fact that teachers with 15 or more years of experience were most likely to take up the program, and used it as a proxy for the intensity of treatment by the ERI program.

Consider the restrictions I impose on potential outcomes for the case with anticipation. I require that the students' grades in classes whose teachers are ineligible or choose not to retire early should not be affected, and I also require if teachers are not aware of this program in advance, we should see no change in the grades. Further, I require that once the ERI program is implemented, whether teachers anticipate or not should no longer affect the students' grades. The parallel trends assumption requires that trends in students' grades among schools with fewer treated teachers are precise counterfactuals for trends among schools with more treated teachers without anticipation.

Following the idea that teachers, regardless of eligibility, may have some information from a third party before the implementation of the ERI program so that they may have anticipated something, I choose $\pi=\mathbb{P}[D_i=1]$, and the probability of getting treated is estimated by calculating the proportion of experienced teachers with more than 15 years of service credit, and I get correpsonding $\hat{\pi}$. Recall this bound is used to capture the intensity of potential unobservable information, which is proportional to the intensity of treatment, and I can use the proportion of teachers with more than 15 years of teaching experience to bound the anticipation probability. The magnitude effect assumption requires that the anticipatory effect, $\tau$, which is the result of potential behavior changes of teachers who think they can retire early, has a smaller magnitude than the treatment effect, $\mu$, which is the change in students' grades caused by the ERI program after its implementation. Even under a perfect anticipation setup from econometricians' perspective, the teachers are not confident they would be eligible for the policy and take it up; thus the expectation that the anticipatory effect does not have a larger magnitude than the treatment effect when the policy occurs is reasonable. Further, the expectation that the treatment effect $\mu$ has the same sign as the non-negative DID estimator from \cite{fitzpatrick2014early} is reasonable. On the other hand, I follow the argument in the same paper that claims teachers near the retirement age and who anticipate the possibility of early retirement may exert less effort than younger teachers. Therefore, one can reasonably argue the anticipatory effect $\tau$ is non-positive.

With all the assumptions discussed above, I can analyze the treatment effect with anticipation starting from the following equation in \cite{fitzpatrick2014early} that estimates the DID estimator
\[Y_{igt}^{s}=\beta_0+\beta_1(Teachers\ge 15)_{ig}\times Post_t+\beta_2Teachers_{ig}\times Post_t+\gamma\bm{X}_{it}+\delta_{ig}+\varphi_{tg}+\varepsilon_{itg}^{s}.\]
$Y_{igt}^{s}$ is the test score of grade $g$ for subject $s$ in school $i$ and year $t$. $Teachers\ge15$ is the number of teachers with at least 15 years of experience before 1994 and who are thus eligible for the program. $Teachers$ is the average total numbers of teachers. $Post$ serves as the period, an indicator variable that equals 1 after the school year of 1993. Vector $\bm{X}$ contains demographic information, and $\delta$ and $\varphi$ are corresponding fixed effect terms. Although covariates are included here, the parametric assumption that it enters the outcome linearly implies the treatment effect is homogeneous across different values of controls. The intensity of information related to the choice of $\pi$ has already been captured by the proportion of experienced teachers. In the absence of anticipation, $\beta_1$ from this equation estimates the effect of the ERI program on students' grades. Based on the estimator for $\beta_1$ and $\pi$ that I choose, I can analyze the results with anticipation. I check the results for different grades and subjects and compare the cases for all teachers. Results are shown in Table \ref{all}. Similar results using data from subject-specific teachers are listed in the supplemental appendix.

\begin{table}[htbp]
	\centering
	\begin{threeparttable}
		\caption{Effects of the Early Retirement Incentive Program on Test Scores}
		\begin{tabular}{ccccc}
			\toprule
			\toprule
			& \multicolumn{2}{c}{Original Results} & \multicolumn{2}{c}{With Anticipation} \\
			\cmidrule{2-5}          & Math  & Reading & Math  & Reading \\
			\midrule
			All Grades & 0.003 & 0.009 & [0.002,0.003] & [0.006,0.009] \\
			& (0.004) & (0.003) &       &  \\
			& [-0.003,0.010] & [0.002,0.015] & [-0.004,0.010] & [0.001,0.014] \\
			Grade 3 & 0.002 & -0.009 & [0.001,0.002] & [-0.018,-0.009] \\
			& (0.01)  & (0.008) &       &  \\
			& [-0.017,0.021] & [-0.025,0.008] & [-0.018,0.021] & [-0.05,0.022] \\
			Grade 6 & -0.0001 & 0.006 & [-0.0002,-0.0001] & [0.004,0.006] \\
			& (0.005) & (0.004) &       &  \\
			& [-0.01,0.01] & [-0.003,0.015] & [-0.022,0.021] & [-0.004,0.014] \\
			Grade 8 & 0.005 & 0.013 & [0.003,0.005] & [0.008,0.013] \\
			& (0.005) & (0.005) &       &  \\
			& [-0.005,0.015] & [0.004,0.022] & [-0.006,0.014] & [0.001,0.021] \\
			\bottomrule
		\end{tabular}%
		
		\begin{tablenotes}
			\small 
			\item\textbf{Notes:} This table contains data for all teachers. Each column presents results from a separate regression. Teachers who teach multiple grades are included in each grade. Teachers who teach in self-contained classrooms are assumed to teach both math and English. I list identified sets in the first row and 95\% level confidence sets in the third row for each result with anticipation. For comparison purposes, I also provide estimators, standard errors, and 95\% confidence intervals for results from \cite{fitzpatrick2014early}. Standard errors are displayed with parentheses.
		\end{tablenotes}
		\label{all}%
	\end{threeparttable}
\end{table}%

For the partial identification results, I provide identified sets as well as the 95\% confidence sets. From the initial results in \cite{fitzpatrick2014early}, I notice the estimator is at time negative. Because these negative estimates are all insignificant at the $95\%$ level, I conclude this distortion error is due to the finite sample bias. For these estimators, I obtain the identified sets and confidence sets by changing the sign restriction. I find the confidence sets, after I incorporate anticipation, still cannot reject the null hypothesis $\mu=0$, regardless of the sign I choose. The result changes mainly in two aspects. On the one hand, the results with anticipation suggest the treatment effect can be smaller than the one we get directly from the DID approach, because the DID estimator also captures the pre-treatment negative effect caused by anticipation. The effect can be overestimated up to about 30\% because of anticipation. On the other hand, the confidence sets, compared with the DID approach, are slightly shifted leftwards, and this result also reminds people to be more careful when interpreting the non-negative treatment effect. Despite these differences, results incorporating anticipation still support the conclusion that the ERI programs have a non-negative effect on students' grades. These results imply incorporating anticipation can make the result more robust and still support our idea of the non-negative effect of ERI programs on student achievement.

I conduct a robustness check to see the range of choices for $\pi$ that keeps the significance of the estimator at a 95\% level, and show the result in Figure \ref{Robustness Analysis}. I focus on the effect of the early retirement incentive program on the reading grade in grade 8. I present the identified set as well as the 95\% confidence set for a sequence of $\pi$, including 0.1, 0.25, 0.5, $\mathbb{P}[D_i=1]$, 0.75, and 0.9. The shorter interval represents the identified set, and the longer one represents the confidence set. I observe that at an anticipation probability of 0.75, more precisely around 0.7, the confidence set marginally contains point 0, which means this positive treatment effect is quite robust even when taking anticipation into consideration. The null hypothesis will only be rejected when about three fourths of the target teachers anticipate it.
\begin{figure}
	\begin{tikzpicture}[scale=1.3],
		title = Figure II. 3,
		
		\begin{axis}[
			height=8cm,
			width=10cm,  
			ymax=2.2,
			ymin=-0.2,
			xmin=-1,
			xmax=13,
			axis y line*=left,
			axis x line*=bottom,
			xlabel=$\pi$: maximum proportion of anticipators in the treated group,
			ylabel=Grades ($\times10^{-2}$),
			xticklabels={0,0.1,0.25,0.5,$\hat{\pi}$,0.75,0.9},xtick={0,1.0,2.5,5.0,5.7,7.5,9.0},
			x tick label style={anchor=north}]
			\addplot+[only marks][error bars/.cd,y dir=both, y explicit]
			coordinates {
				(0,1.3) +- (0.9,0.9)			
			};
			
			\addplot[red, solid,mark=-]
			coordinates {
				(1.0,1.3) (1.0,1.181)
			};
			\addplot[red, solid,mark=-]
			coordinates {
				(2.5,1.3) (2.5,1.04)
			};
			
			\addplot[red, solid,mark=-]
			coordinates {
				(5.0,1.3) (5.0,0.87)
			};
			\addplot[red, solid,mark=-]
			coordinates {
				(5.7,1.30) (5.7,0.80)
			};
			\addplot[red, solid,mark=-]
			coordinates {
				(7.5,1.3) (7.5,0.746)
			};
			\addplot[red,solid,mark=-]
			coordinates {
				(9.0,1.3) (9.0,0.69)
			};
			\addplot[black,solid,mark=x]
			coordinates {
				(1.3,2.14) (1.3,0.34)
			};
			\addplot[black, solid,mark=x]
			coordinates {
				(2.8,2.10) (2.8,0.25)
			};
			\addplot[black, solid,mark=x]
			coordinates {
				(5.3,2.074) (5.3,0.104)
			};
			\addplot[black, solid,mark=x]
			coordinates {
				(6.0,2.071) (6.0,0.071)
			};
			\addplot[black, solid,mark=x]
			coordinates {
				(7.8,2.065) (7.8,-0.005)
			};
			\addplot[black, solid,mark=x,mark options={solid},]
			coordinates {
				(9.3,2.06) (9.3,-0.07)
			};
			\addplot[dashed] coordinates {(-1.0,0) (10.0,0)};
		\end{axis}
		\begin{customlegend}[legend entries={95\% CI for Point Estimator, Identified Set, 95\% Confidence Set}, legend style={at={(9,6)},anchor=north}]
			\addlegendimage{blue,fill=black!50!blue,mark=*,sharp plot}
			\addlegendimage{red,fill=red,sharp plot,mark=-}
			\addlegendimage{black,fill=black,sharp plot,mark=x}
		\end{customlegend}
	\end{tikzpicture}%
	\caption{Robustness Analysis}
	\label{Robustness Analysis}
\end{figure}
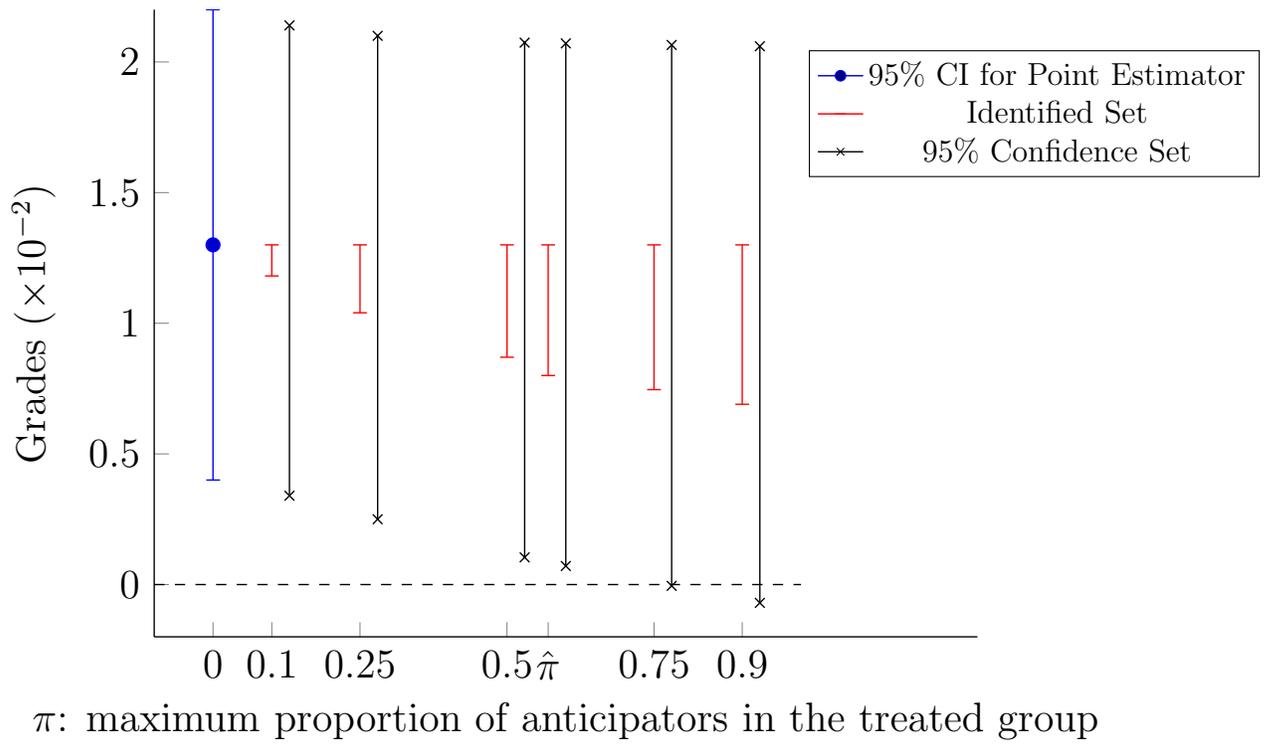
\section{Discussion}
This section discusses modifications of the two-period DID model that only considers the pre-treatment anticipatory behavior in the treated group to incorporate anticipation in broader setups. These generalizations build the anticipation framework on more empirical related assumptions and cover problems researchers encounter in applied work.

First, we focus on the restrctions on pre-treatment anticipatory behavior in the treated group only. In our discussion about Assumption \ref{onlychannel}, I noted one can understand the focus only on anticipatory behavior within the treated group as an implicit assumption of ``perfect anticipation'' which implies units that anticipate will get the anticipated treatment status in the future. However, this assumption might be too strong in some circumstances. For example, people may anticipate the existence of a specific policy, but they are not sure they will get treated. In the early retirement incentive program example, teachers can anticipate the possibility of early retirement, but they are not sure about the amount of service credit they can buy and thus cannot perfectly anticipate their future treatment status. I consider the consequences if a mistake is made when anticipating a future treatment in this section and successfully incorporate the anticipatory behavior in the control group. Furthermore, exploring the robustness of our conclusion by checking the error rate under which consistent conclusions can still be obtained is essential. If researchers aim to test a particular null hypothesis, they can also report the lowest error rate at which the null hypothesis is no longer rejected.

To distinguish between anticipated treatment status and the actual treatment units received and to incorporate imperfect anticipation, I now define the anticipation status variable $A_i$ as a random variable that takes three values $\{-1,0,1\}$. The difference between $A_{i}=0$ and $A_{i}\ne0$ distinguishes between those who anticipate and those who don't. However, among those who anticipate, $A_{i}=1$ indicates this anticipates correctly, whereas $A_{i}=-1$ indicates incorrect anticipation. The potential outcome for unit $i$ in period $t$ still depends on both anticipation and treatment $(a,d)$ and is denoted by the random variable $Y_{it}(a,d)$. The only difference is now $a\in\{-1,0,1\}$ and $A_i$ is no longer a binary treatment. Compared with the benchmark model, some modifications need to be made on the assumptions to incorporate imperfect anticipation.

The random sampling assumption remains unchanged, and the only modification is that anticipatory behavior also happens within the control group, so we have more potential outcomes now and we need another index in the expression of pre-treatment potential outcomes for the control group as well.
\begin{assumption}
	\label{sampling:IA}
	$\{Y_{i0}(a,d), Y_{i1}(d), D_{i},A_{i}\}_{i=1}^{n}$  are independently and identically distributed across $i$.
\end{assumption}

The assumption that requires anticipation to be the only channel for the future to affect the present is still needed. However, the fact that now one's pre-treatment behavior is affected by his anticipated status, which is likely to be different from the treatment status he receives in the future, needs to be addressed.
\begin{assumption} 
	\label{onlychannel:IA}
	The potential outcomes satisfy
	\[Y_{i0}(0,0)=Y_{i0}(0,1)=Y_{i0}(1,0)=Y_{i0}(-1,1)\qquad Y_{i0}(-1,0)=Y_{i0}(1,1).\] 
\end{assumption}

Assumption \ref{onlychannel:IA} mainly describes two groups of units. The first group either does not anticipate or they anticipate they will not get treated in the future so they will behave the same way. People in the second group expect that they will get treated in the future and they will behave the other way. Because anticipation is the only way I allow the future to affect the present, one's anticipated treatment status determines their pre-treatment behavior. A treated person with incorrect anticipation should have the same anticipated treatment status as an untreated person who anticipates correctly, and thus, they should behave the same way as those who do not anticipate. However, those who will get treated and anticipate correctly should behave the same way as those who won't be treated but anticipate incorrectly, because they all think they will be covered. This assumption points out that what drives people's pre-treatment behavior is their beliefs about the treatment status. Trying to distinguish between anticipated treatment status and real treatment received is important in the case in which people make mistakes while anticipating.

One may argue that in the situation of imperfect anticipation, people may no longer clearly anticipate the future treatment and may believe they will get treated with a probability. Different people hold different beliefs about their treatment possibilities and behave differently. This situation can be regarded as a case in which anticipation is a multivalue treatment, and people with different beliefs receive different levels of anticipation treatment. Because all things related to anticipation are unobservable, introducing more levels of different anticipation treatments also requires more assumptions regarding each group. Therefore, I still focus on the case in which people's anticipation about the future concerns whether they will be treated.

The parameter of interest $\mu_g$ is still
\[\mu_g=\mathbb{E}[g(Y_{i1}(1))-g(Y_{i1}(0))|D_i=1],\]
with similar restrictions on $g(.)$ function. Because people's anticipation status does not affect their post-treatment outcomes, I still use one index to represent the potential outcomes $Y_{it}(d)$. The anticipatory effect is modified as
\begin{eqnarray*}
	\tau_g &=&\mathbb{E}[g(Y_{i0}(1,1))-g(Y_{i0}(0,0))|D_i=1,A_i=1]\\
	&=&\mathbb{E}[g(Y_{i0}(-1,0))-g(Y_{i0}(0,0))|D_i=1,A_i=-1],
\end{eqnarray*}
where I implicitly require that the anticipatory effects for those who anticipate correctly and incorrectly are the same.
\begin{assumption}
	\label{paralleltrends:IA} \[\mathbb{E}[g(Y_{i1}(0))-g(Y_{i0}(0,0))|D_i=1]=\mathbb{E}[g(Y_{i1}(0))-g(Y_{i0}(0,0))|D_i=0].\]
\end{assumption}

The key idea of the parallel trend is to require those who get treated and those who do not behave in the same way without the treatment. Following this idea, I need to choose those who have an anticipated untreated status when compared with the outcome in the first period.
\begin{assumption}
	\label{prob_bound:IA} 
	\[\mathbb{P}[A_i\ne0|D_i=1],\mathbb{P}[A_i\ne0|D_i=0]\le \pi\]
	\[\mathbb{P}[A_i=-1|D_i=1,A_i\ne0]=\mathbb{P}[A_i=-1|D_i=0,A_i\ne0]=\varepsilon.\]
\end{assumption}

The first part of Assumption \ref{prob_bound:IA} now requires us to choose a $\pi$ as the bound for the probability of anticipation among treated and control groups. This modification on restriction is straightforward because under the ``perfect anticipation'' situation, units that won't get treated will not react to the anticipation, but now they may react to it because of the incorrect anticipation. If one would like to argue a specific relationship exists between the possibility of anticipation within treated and control groups, this assumption might be relaxed. For now, I assume a common bound $\pi$ for two probabilities is chosen. In the second part, I assume the fraction of units that anticipate incorrectly is known as $\varepsilon$ across treated and control groups. Recall that in the discussion about the bias caused by anticipation, I point out that the bias is driven by those who anticipate and react to it in the first period. Under the setup of imperfect anticipation, the proportion of units that cause the bias is determined by anticipated treatment status and thus related to both the proportion of those who anticipate and the accuracy rate among anticipators.
\begin{assumption}
	\label{magnitude:IA}
	$\abs{\tau_g}\le\abs{\mu_g}$.
\end{assumption}

Assumption \ref{magnitude:IA} is the same magnitude restriction as before. Based on the assumptions above, we are now able to partially identify the parameter of interest $\mu_g$.
\begin{theorem}\label{imperfect}
	Under Assumptions \ref{sampling:IA}-\ref{magnitude:IA}, the parameter of interest $\mu_g$ is partially identified via a closed interval based on  $m_g=\mathbb{E}[g(Y_{i1})-g(Y_{i0})|D_i=1]-\mathbb{E}[g(Y_{i1})-g(Y_{i0})|D_i=0]$, $\pi$ and $\varepsilon$. Define $\mu_{g,1}(\varepsilon)=\frac{m_g}{1+\text{sgn}(\tau_g\mu_g)\pi\varepsilon}$ and $\mu_{g,2}(\varepsilon)=\frac{m_g}{1-\text{sgn}(\tau_g\mu_g)\pi(1-\varepsilon)}$. Then we have, \[\mu_g\in\left[\min\left\{\mu_{g,1}(\varepsilon),\mu_{g,2}(\varepsilon)\right\},\max\left\{\mu_{g,1}(\varepsilon),\mu_{g,2}(\varepsilon)\right\}\right].\]
\end{theorem}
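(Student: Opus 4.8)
The plan is to mirror the argument behind Theorem~\ref{chap2:benchmark}: first determine which period-$0$ potential outcome is actually observed in each $(A_i,D_i)$ cell, then express the observed difference-in-differences estimand $m_g$ as $\mu_g$ minus an average anticipatory distortion, and finally invert that identity and optimize over the unknown anticipation probabilities. Throughout I treat $\text{sgn}(\tau_g\mu_g)$ as fixed, exactly as in the benchmark.

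First I would use Assumption~\ref{onlychannel:IA} to tabulate the realized period-$0$ outcome. The two equalities say that $Y_{i0}$ equals the undistorted baseline $Y_{i0}(0,0)$ in every cell except the two in which a unit \emph{behaves as if treated}: the correctly-anticipating treated units ($D_i=1$, $A_i=1$) and the incorrectly-anticipating control units ($D_i=0$, $A_i=-1$), both of which realize the distorted level $Y_{i0}(1,1)=Y_{i0}(-1,0)$. Combining Assumption~\ref{paralleltrends:IA} with the definition of $\mu_g$ then shows that the ``clean'' difference-in-differences formed from $Y_{i0}(0,0)$ rather than the observed $Y_{i0}$ equals $\mu_g$ exactly; this is the usual parallel-trends cancellation, now anchored at the undistorted baseline.

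Next I would compute the two contamination terms. Writing $p_1=\mathbb{P}[A_i\neq0\mid D_i=1]$ and $p_0=\mathbb{P}[A_i\neq0\mid D_i=0]$, Assumption~\ref{prob_bound:IA} gives $\mathbb{P}[A_i=1\mid D_i=1]=(1-\varepsilon)p_1$ and $\mathbb{P}[A_i=-1\mid D_i=0]=\varepsilon p_0$. Invoking the homogeneity built into the definition of $\tau_g$ so that the average distortion in each contaminated cell is $\tau_g$, the treated and control distortions are $(1-\varepsilon)p_1\tau_g$ and $\varepsilon p_0\tau_g$. Subtracting these from the clean estimand yields the key identity
\[m_g=\mu_g-\tau_g\big[(1-\varepsilon)p_1-\varepsilon p_0\big].\]
Setting $k=(1-\varepsilon)p_1-\varepsilon p_0$, which ranges over $[-\varepsilon\pi,(1-\varepsilon)\pi]$ as $p_1,p_0$ vary in $[0,\pi]$, I rearrange to $\mu_g=m_g/\big(1-(\tau_g/\mu_g)k\big)$. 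Substituting $\tau_g/\mu_g=\text{sgn}(\tau_g\mu_g)\abs{\tau_g/\mu_g}$ with $\abs{\tau_g/\mu_g}\le1$ by Assumption~\ref{magnitude:IA}, and putting $u=\abs{\tau_g/\mu_g}\,k\in[-\varepsilon\pi,(1-\varepsilon)\pi]$, gives $\mu_g=m_g/\big(1-\text{sgn}(\tau_g\mu_g)u\big)$. Because $\pi<1$ keeps the denominator positive, the map $u\mapsto m_g/\big(1-\text{sgn}(\tau_g\mu_g)u\big)$ is monotone, so $\mu_g$ lies between its endpoint values at $u=-\varepsilon\pi$ and $u=(1-\varepsilon)\pi$; these are precisely $\mu_{g,1}(\varepsilon)$ and $\mu_{g,2}(\varepsilon)$, which establishes the stated closed interval.

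I expect the main obstacle to be the bookkeeping of the second and third steps rather than any hard inequality: one must correctly match every $(A_i,D_i)$ cell to its realized potential outcome and, crucially, guarantee that the control-group distortion coming from the incorrectly-anticipating controls also averages to $\tau_g$. This requires the anticipatory effect to be common across correctly- and incorrectly-anticipating units \emph{and across treatment arms}, i.e.\ the cross-arm homogeneity implicit in the two-line definition of $\tau_g$ together with Assumption~\ref{onlychannel:IA}; I would state this dependence explicitly where it is used. Once the identity $m_g=\mu_g-\tau_g[(1-\varepsilon)p_1-\varepsilon p_0]$ is secured, the range computation and monotonicity are routine, and a useful consistency check is that $\varepsilon=0$ collapses the control distortion and recovers Theorem~\ref{chap2:benchmark}.
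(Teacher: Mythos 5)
Your proposal is correct and takes essentially the same route as the paper's own proof: both anchor the parallel-trends decomposition at the undistorted baseline $Y_{i0}(0,0)$, compute the treated and control contaminations as $(1-\varepsilon)\mathbb{P}[A_i\ne0\mid D_i=1]\,\tau_g$ and $\varepsilon\,\mathbb{P}[A_i\ne0\mid D_i=0]\,\tau_g$, arrive at the identity $\mu_g=m_g+\big[(1-\varepsilon)\mathbb{P}[A_i\ne0\mid D_i=1]-\varepsilon\,\mathbb{P}[A_i\ne0\mid D_i=0]\big]\tau_g$ with the coefficient ranging over $[-\pi\varepsilon,\pi(1-\varepsilon)]$ by Assumption \ref{prob_bound:IA}, and close with the magnitude restriction \ref{magnitude:IA}. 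Your two refinements are consistent with the paper rather than departures from it: the cross-arm homogeneity you flag (that incorrectly anticipating controls suffer the same average distortion $\tau_g$) is exactly what the paper builds implicitly into its two-line definition of $\tau_g$, and your monotone-in-$u$ endpoint argument merely packages, uniformly over signs, the case-by-case inequality manipulation the paper carries out for the representative case $0\le\tau_g\le\mu_g$.
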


If $\varepsilon=0$, which represents the situation of perfect anticipation, this interval degenerates to the interval we get in the benchmark model for the treatment effect. Compared with the interval I get for the treatment effect above, one significant difference is that now the treatment effect is not bounded by the DID estimator from one side. That difference derives from the fact that both the control group and the treated group deviate in the first period. For a perfect anticipation setup, those in the control group will not react to the anticipation, and only the treated group is moving either upward or downward depending on the signs of anticipatory effect. When imperfect anticipation is allowed and both control and treated groups react to it, the distortion in the first period can be either positive or negative depending on the anticipation possibility in different groups. To accommodate this framework in more empirical settings, I don't impose specific assumptions on the relationship between the possibility of anticipating in different groups. If in specific situations, for example, a case in which those who get treated receive private information, researchers can decide the relationship between these two possibilities, improving the identified set based on further assumptions is possible.

Although I start with the case in which $\varepsilon$ is known, a better explanation for including the error rate of anticipation is to understand this procedure as a sensitivity check. Researchers can choose different possible error rates $\varepsilon$ and analyze the region where their conclusions are robust to the choice of error rate. Further, if a specific null hypothesis is tested, the error rate among which the conclusion holds consistently can also be reported. This robustness check procedure helps people understand to what extent the conclusion is affected by the assumption of perfect anticipation.

One might also be interested in what happens if the anticipatory behavior in the treated group has an effect on the post-treatment behavior and the post-treatment potential outcomes in the treated group are also different between those who anticipate and those who do not. Starting from the benchmark model, now let us assume $Y_{i1}(0,1)$ and $Y_{i1}(1,1)$ are different. Then, we have two effects related to anticipation: $\tau_1=\mathbb{E}[g(Y_{i0}(1,1))-g(Y_{i0}(0,1))|D_i=1]$ and $\tau_2=\mathbb{E}[g(Y_{i1}(1,1))-g(Y_{i1}(0,1))|D_i=1]$. Following logic similar to that above, one can find $\mu_g=m_g+\mathbb{P}[A_i=1|D_i=1](\tau_1-\tau_2)$, which implies that if no more assumptions about the relationship between the pre- and post-treatment effect caused by anticipation are imposed, nothing more can be said. Further, if one would like to assume the effect caused by anticipation remains unchanged before and after the treatment, this equation points out that the existence of anticipation will have no effect on the identification and estimation of the parameter of interest here. This assertion is a generalization of the no anticipation assumption in the canonical DID model where both effects are assumed to be zero.

Further generalizations including the model incorporates anticipation with covariates in multiple periods as well as nonlinear outcomes that involves the change-in-changes model are also provided in the appendix.
\section{Conclusion}
This paper proposes a potential-outcomes framework for analyzing treatment effects in the presence of anticipatory behavior. Based on a two-period DID model, the findings of this paper show how the standard estimator can be biased, and provide a weighted average of the treatment effect and the anticipatory effect. I also provide conditions under which I can obtain upper and lower bounds for the treatment effects. The motivation and implications of each assumption are discussed to accommodate empirical research backgrounds. This paper contributes to the empirical research by introducing anticipation in a practical and easy-to-generalize way starting from the classical DID model, which makes it robust to the existence of this kind of forward looking behavior. An easy-to-implement estimation and inference strategy is also provided. Based on this inference procedure, I propose a sensitivity analysis approach that discusses the validity of conclusions under different restrictions on the anticipation possibility, and this approach suggests a specific range of t-statistics that guarantee the effectiveness of the conclusion without anticipation when the treatment and anticipatory effect have different signs. I illustrate the results in this paper by examining the effect of early retirement incentive programs on student achievement while considering anticipation, and show potential pitfalls if anticipation is ignored. To make this framework more general and less restrictive, I provide several modifications based on the two-period DID model to be consistent with common empirical setups.

The analysis for this paper still leaves open questions, some of which are discussed in the appendix. I provide several alternative combinations of assumptions that can be used to obtain partial identification results for treatment effects with anticipation, for example, bounded outcomes assumptions and further conditional independence restrictions on potential outcomes and treatments. The choice of these assumptions depends on the empirical backgrounds researchers are working on. Alternative assumptions combined with available bounds provide applied workers with more choices that fit into broad applied circumstances. Further work can focus on some frequent issues in empirical studies, for example, anticipation effects related to instrumental variables. When a time gap exists between the instrumental variable and the treatment, the instrumental variable can cause people to anticipate future treatment and thus react to it before the treatment occurs. This setup is also related to cases with imperfect compliance and situations where anticipation will affect people's future selections into treatments. Another possible extension is to incorporate the anticipation phenomenon in the synthetic control framework. This extension makes sense because the treatments in typical synthetic control applications are often big policy changes that would naturally be anticipated. Following \cite{ferman2019synthetic}, the anticipation treatment can be regarded as an unobservable confounder that is correlated with treatment, because only those who get treated in the future will react to anticipation. In that case, the pre-treatment weight that fits well may not construct good counterfactual post-treatment outcomes for the treated unit, and thus causes problem. Analyzing the behavior of the synthetic control estimator and DID estimator with anticipation and comparing the performance of these two approaches will be of interest for applied work. By considering these situations, I am more likely to incorporate anticipation in more diverse empirically relevant situations and introduce it into more applied models.
\newline

\textit{Acknowledgements.}  I am deeply grateful to Matias Cattaneo for continued advice and encouragement. I thank Yuehao Bai, Ming Fang, Max Farrell, Yingjie Feng, Zheng Gong, Florian Gunsillius, Andreas Hagemann, Xuming He, Michael Jansson, Shaowei Ke, Ziteng Lei, Xinwei Ma, Kenichi Nagasawa, Mel Stephens, Gonzalo Vazquez-Bare, Yian Yin and seminar participants at many institutions for their valuable feedbacks. All errors are my own.

\newpage
\bibliographystyle{chicago}
\bibliography{Notes}

\newpage
\appendix
\section{Appendix}
\begin{proof}[Proof of Theorem \ref{chap2:benchmark}]
	
	Based on different signs of $\mu_g$ and $\tau_g$, the expression of the upper and lower bounds of the closed intervals might be different. However, the critical logic to obtain the identified set is the same. Here, I take the case in which $0\le\tau_g\le\mu_g$, for example, and all the other cases can be analyzed based on similar logic.
	
	I know
	\begin{eqnarray*}
		\mu_g&=&\mathbb{E}[g(Y_{i1}(0,1))-g(Y_{i1}(0,0))|D_i=1]\\
		&=&\mathbb{E}[g(Y_{i1}(0,1))-g(Y_{i0}(0,0))-(g(Y_{i1}(0,0)-g(Y_{i0}(0,0)))|D_i=1]\\
		&=&\mathbb{E}[g(Y_{i1}(1))|D_i=1]-\mathbb{E}[g(Y_{i0}(0))|D_i=1]\\
		&-&(\mathbb{E}[g(Y_{i1}(0))|D_i=0]-\mathbb{E}[g(Y_{i0}(0))|D_i=0]).
	\end{eqnarray*}
	The first, third, and fourth terms are observable, whereas for the second one,
	\[\mathbb{E}[g(Y_{i0})|D_i=1]=\mathbb{E}[g(Y_{i0}(0,1))|D_i=1]+\mathbb{E}[A_i(g(Y_{i0}(1,1))-g(Y_{i0}(0,1)))|D_i=1]\]
	\[=\mathbb{E}[g(Y_{i0}(0,1))|D_i=1]+\mathbb{P}[A_i=1|D_i=1]\tau_g\le\mathbb{E}[g(Y_{i0}(0))|D=1]+\pi\tau_g.\]
	I further apply the inequality that $0\le\tau_g\le\mu_g$ and get
	\[\mu_g\le\mathbb{E}[g(Y_{i1}(1))-g(Y_{i0})|D_i=1]-(\mathbb{E}[g(Y_{i1}(0))|D_i=0]-\mathbb{E}[g(Y_{i0}(0))|D_i=0])+\pi\tau_g.\]
	For simplicity, I call $m_g=\mathbb{E}[g(Y_{i1})-g(Y_{i0})|D_i=1]-\mathbb{E}[g(Y_{i1})-g(Y_{i0})|D_i=0]$, and I can conclude
	\[0\le\tau_g\le\frac{m_g}{1-\pi}\qquad\tau_g\le\mu_g\le\frac{m_g}{1-\pi}.\]
	From the equation above, I note $\mu_g=\mathbb{P}[A_i=1|D_i=1]\tau_g+m_g\ge m_g$. I can conclude $\mu\in\left[m_g,\frac{m_g}{1-\pi}\right]$. 
\end{proof}

\begin{proof}[Proof of Theorem \ref{imperfect}]
	
	As in the previous situation, I take the situation $0\le\tau_g\le\mu_g$ as an example:
	\begin{eqnarray*}
		\mu_g&=&\mathbb{E}[g(Y_{i1}(1))-g(Y_{i1}(0))|D_i=1]\\
		&=&\mathbb{E}[g(Y_{i1}(1))-g(Y_{i0}(0,0))-(g(Y_{i1}(0)-g(Y_{i0}(0,0)))|D_i=1]\\
		&=&\mathbb{E}[g(Y_{i1}(1))|D_i=1]-\mathbb{E}[g(Y_{i0}(0,0))|D_i=1]\\
		&-&(\mathbb{E}[g(Y_{i1}(0))|D_i=0]-\mathbb{E}[g(Y_{i0}(0,0))|D_i=0]).
	\end{eqnarray*}
	The first and the third terms are observable as $\mathbb{E}[g(Y_{i1})|D_i=1]$ and $\mathbb{E}[g(Y_{i1})|D_i=0]$. For the second one, those who anticipate their treatment status correctly will make a difference:
	\[\mathbb{E}[g(Y_{i0})|D_i=1]=\mathbb{E}[g(Y_{i0}(0,0))|D_i=1]+\mathbb{P}[A_i=1|D_i=1]\tau_g,\]
	whereas for the fourth term, those who wrongly anticipate their future will react to it:
	\[\mathbb{E}[g(Y_{i0})|D_i=0]=\mathbb{E}[g(Y_{i0}(0,0))|D_i=0]+\mathbb{P}[A_i=-1|D_i=0]\tau_g.\]
	Then I have 
	\[\mu_g=\mathbb{E}[g(Y_{i1}(1))|D_i=1]-\mathbb{E}[g(Y_{i0})|D_i=1]+\mathbb{P}[A_i=1|D_i=1]\tau_g\]
	\[-\mathbb{E}[g(Y_{i1}(0))|D_i=0]+\mathbb{E}[g(Y_{i0})|D_i=0]-\mathbb{P}[A_i=-1|D_i=0]\tau_g.\]
	\[\mu_g=m_g+(\mathbb{P}[A_i\ne0|D_i=1](1-\varepsilon)-\mathbb{P}[A_i\ne0|D_i=0]\varepsilon)\tau_g.\]
	Using the bound from Assumption \ref{prob_bound:IA}, I can get that the coefficient before $\tau_g$ belongs to the interval $[-\pi\varepsilon,\pi(1-\varepsilon)]$ and follow the same approach from the proof of Theorem \ref{chap2:benchmark}. I then have that $\mu_g\in\left[\frac{m_g}{1+\pi\varepsilon},\frac{m_g}{1-(1-\varepsilon)\pi}\right]$
\end{proof}

\begin{proof}[A Toy Economic Behavior Model]
	
This section provides a toy economic behavior model that motivates the choice of $\pi=\mathbb{P}[D_i=1]$ and explains what assumptions one needs to infer this choice of bounding parameter. Consider a setup in which a future policy has led to some public information prior to its implementation, and people can take advantage of this kind of signal to anticipate. For example, rumors and changes in teaching assignments may arise before the early retirement incentive program occurs, and teachers may anticipate based on them. The density of the signals is correlated with the overall treatment ratio, because a higher ratio of experienced teachers who are eligible for the program implies more people are interested in the policy, and they may talk more about it.

For simplicity, consider a case in which $\mathbb{P}[A_i=1|D_i=0]=\mathbb{P}[A_i=1|D_i=1]$, which means the probability of anticipating is the same between the control and treatment groups. Suppose the density of information generated by the future implementation of the treatment, denoted by $u$, is proportional to the treatment ratio $\mathbb{P}[D_i=1]$ in the form
\begin{equation*}	
	u=\alpha\mathbb{P}[D_i=1]\qquad\alpha>0,
\end{equation*}
and that this information is known to both treated and control groups. This formula captures the idea that a higher treatment ratio will lead to a case in which the information needed for anticipation is more explicit for people. 

For each unit $i$, I introduce a random variable $U_i$ that represents the level of information to which one needs to be exposed for anticipation to occur. A higher $U_i$ means this unit needs more signals to realize the possible treatment. By contrast, a lower $U_i$ indicates this unit is keenly observant and can reach a conclusion with less information. I will use $F(.)$ to represent the c.d.f of $U_i$ across the population. For any single unit $i$, the anticipatory behavior follows \[A_i=\mathbb{I}[U_i\le u],\]
which means the density of information needs to be larger than the cutoff $U_i$ for unit $i$ to anticipate. Thus, I have
\begin{equation*}
	\mathbb{P}[A_i=1]=\mathbb{P}[U_i\le u]=F(\alpha\mathbb{P}[D_i=1]).
\end{equation*}
I further assume $f^{\prime}(.)\ge0$, where $f(.)$ is the probability density function of the random variable $U_i$. This assumption states that the fraction of people who can marginally anticipate increases with the level of information needed to form anticipation, consistent with the intuition that people who can anticipate based on little information should be small.

Based on the assumptions above, I can conclude \[\frac{\partial^{2}\mathbb{P}[A_i=1]}{\partial\mathbb{P}[D_i=1]^{2}}=\alpha^{2}f^{\prime}(\alpha\mathbb{P}[D_i=1])\ge0,\]
which means $\mathbb{P}[A_i=1]$ should be a convex function on interval $[0,1]$ with respect to $\mathbb{P}[D_i=1]$. If $\mathbb{P}[D_i=1]=0$, I conclude $\mathbb{P}[A_i=1]=0$, because there is nothing to anticipate. I also know that when $\mathbb{P}[D_i=1]=1$, $\mathbb{P}[A_i=1]\le 1$. These bounds, combined with the convex function argument, show \[\mathbb{P}[A_i=1]\le\mathbb{P}[D_i=1]\times1+0=\mathbb{P}[D_i=1].\]	
The above model illustrates that under a setup in which units can get information that is public for all the groups but unobservable for econometricians and these units anticipate future treatments based on the information, Assumption \ref{prob_bound} is satisfied by choosing $\pi=\mathbb{P}[D_i=1]$ as long as one is willing to assume fewer people can anticipate from less information, and as information accumulates, the number of people who marginally learn it increases.
\end{proof}

\begin{proof}[Including covariates]
	
	By taking all the assumptions conditionally, this approach can be generalized to include covariates. I use notation $b(x)$ to represent the value of parameter $b$ when conditional on ``$X=x$.'' Following a similar argument in Theorem \ref{chap2:benchmark}, I focus on the case in which $0\le\tau_g(x)\le\mu_g(x)$.
	From the proof of Theorem \ref{chap2:benchmark}, I know
	\[\mu_g(x)=\mathbb{E}[g(Y_{i1}(1))|D_i=1,X]-\mathbb{E}[g(Y_{i0}(0))|D_i=1,X]\]
	\[-(\mathbb{E}[g(Y_{i1}(0))|D_i=0,X]-\mathbb{E}[g(Y_{i0}(0))|D_i=0,X]).\]
	The first, third, and the fourth terms are observable, whereas for the second one,
	\begin{eqnarray*}
	\mathbb{E}[g(Y_{i0})|D_i=1,X]&=&\mathbb{E}[g(Y_{i0}(0,1))+A_i(g(Y_{i0}(1,1))-g(Y_{i0}(0,1)))|D_i=1,X]\\
	&=&\mathbb{E}[g(Y_{i0}(0,1))|D_i=1,X]+\mathbb{P}[A_i=1|D_i=1,X]\tau_g(x)\\
	&\le&\mathbb{E}[g(Y_{i0}(0))|D_i=1,X]+\pi(x)\tau_g(x).
	\end{eqnarray*}
	I further apply the inequality that $0\le\tau_g(x)\le\mu_g(x)$, and call $m_g(x)=\mathbb{E}[g(Y_{i1})-g(Y_{i0})|D_i=1,X]-\mathbb{E}[g(Y_{i1})-g(Y_{i0})|D_i=0,X]$. I get
	\[\tau_g(x)\le\mu_g(x)\le m_g(x)+\pi(x)\tau_g(x),\]
	and I can conclude
	\[0\le\tau_g(x)\le\frac{m_g(x)}{1-\pi(x)}\qquad\tau_g(x)\le\mu_g(x)\le\frac{m_g(x)}{1-\pi(x)}.\]
	I note $\mu_g(x)=\mathbb{P}[A_i=1|D_i=1,X]\tau_g(x)+m_g(x)\ge m_g(x)$ and can conclude $\mu_g(x)\in\left[m_g(x),\frac{m_g(x)}{1-\pi(x)}\right]$, or I can write it as 
	\[\mathbb{E}[\mu_g-m_g|X]\ge0\qquad\mathbb{E}\left[\mu_g-\frac{m_g}{1-\pi(x)}\bigg|X\right]\le0.\]
	To avoid calculating so many conditional expectations, I follow \cite{abadie2005semiparametric} and write		\[m_g(x)=\mathbb{E}[\rho_0(g(Y_1)-g(Y_0))|X=x]\qquad\rho_0=\frac{D_i-\mathbb{P}[D_i=1|X]}{\mathbb{P}[D_i=1|X](1-\mathbb{P}[D_i=1|X])}.\].
\end{proof}

\begin{proof}[Multiple Periods] 
	
	In applied work, the standard two-period DID model is not the majority, and people would like to incorporate data from multiple periods to support their conclusions. Considering the case with multiple periods, besides the two-period model, is crucial to accommodate anticipation in longitudinal data. I follow the framework of \cite{sun2020estimating} and consider anticipation in a staggered adoption case with multiple periods and possibly different treatment times. \cite{sun2020estimating} analyze this framework to address the effect of treatment effect heterogeneity in a two way fixed effects model and propose a DID estimator. My focus is on how to incorporate anticipation in a multi-period model and if homogeneity assumptions are imposed and researchers use an alternative approach, a similar logic can be applied to incorporate anticipation.
	
	Suppose the potential outcome for unit $i$ in period $t$ is represented by $Y_{it}(a,e)$. $i\in\left\{1,2,\cdots,n\right\}$ indexes the unit and $e\in supp(E_i)=\left\{1,2,\cdots,T,\infty\right\}$ denotes the date when this unit gets the first treatment and $E_i$ is one realization of $e$. In a setting with staggered adoption, one unit will always get treated after the first treatment. A binary random variable $A_{it}$ that takes value $a\in\{0,1\}$ represents the unobservable anticipation status for unit $i$ in period $t$. $e=\infty$ implies this unit has never been treated and thus belongs to the control group. I still focus on the pre-treatment anticipatory behavior within the treated group, which means I only distinguish $Y_{it}(0,e)$ and $Y_{it}(1,e)$ for $t<e$ and $e\ne\infty$. To incorporate anticipation in the multi-period model, the assumptions in the two-period DID model need some modification.
	\begin{saassumption}
		\label{sample:multi}
		The potential outcomes $\{Y_{it}(a,e), E_i,A_{it}\}_{t=1}^{T}$ are independently and identically distributed across $i$ for $(a,e)\in\{0,1\}\times\{1,2,\dots,T,\infty\}$.
	\end{saassumption}
	\begin{saassumption} $Y_{it}(0,e)=Y_{it}(0,e^{\prime})\quad\text{for}\quad t<\min\{e,e^{\prime}\}$.
		\label{onlychannel:multi}	
	\end{saassumption}
	
	Assumption \ref{sample:multi} restricts the sampling process by imposing i.i.d. restrictions, and Assumption \ref{onlychannel:multi} restricts the anticipatory behavior to be the only reason for the future to affect the present, both in the same way as in the two-period DID model.
	
	Under this setup, the parameter of interest I focus on is 
	\[\mu_g(e,t)=\mathbb{E}[g(Y_{it}(0,e))-g(Y_{it}(0,\infty))|E_i=e]\] with similar restrictions on function $g(.)$. Define the corresponding anticipatory effect for anticipators as \[\tau_g(e,t)=\mathbb{E}[g(Y_{it}(1,e))-g(Y_{it}(0,e))|E_i=e,A_i=1]\] and write $\mu_g(e,t)$ as $\mu(e,t)$ and $\tau_g(e,t)$ as $\tau(e,t)$ if $g(.)$ is the identity function.
	\begin{saassumption}
		For all $t_1\ne t_2$, \[\mathbb{E}[g(Y_{it_1}(0,\infty))-g(Y_{it_2}(0,\infty))|E_i=e]=\mathbb{E}[g(Y_{it_1}(0,\infty))-g(Y_{it_2}(0,\infty))|E_i=\infty]\]
		for all $e$.
		\label{paralleltrends:multi}
	\end{saassumption}

	Parallel trends assumption is imposed and it implies potential outcomes without anticipation and treatment will change in the same way among different groups who get treated at different times.
	
	Following \cite{sun2020estimating}, I build a DID variable to help us analyze the parameter of interest while considering anticipation. Unlike the two-period model, I do not have a natural period 0 for comparison and need to choose one. Consider the following term for $t\ge e$ and $s<e$:
	\[m_g(e,s,t)=\mathbb{E}[g(Y_{it})-g(Y_{is})|E_i=e]-\mathbb{E}[g(Y_{it})-g(Y_{is})|E_i=\infty].\]
	For simplicity, I define $\mathbb{P}[A_{is}=1|E_i=e]=h(e,s)$, which denotes the probability of anticipating in period $s$ before the treatment happens in period $e$. $t$ is a post-treatment period that I am interested in, $e$ is the period when unit $i$ receives the treatment, and $s$ is a chosen pre-treatment period that helps us build the DID estimator as introduced in \cite{sun2020estimating}. I later discuss the choice of period $s$ later. Impose assumptions as in the two-period DID model to get the bounds for the treatment effect.
	\begin{saassumption}
		\label{prob_bound:multi}
		$0\le h(e,s)\le\pi(e,s)$.
	\end{saassumption}

	Based on the discussion in the two-period DID model, I know the source of distortion is those who anticipate and react to it before the treatment. In the multi-period model, thinking about the probability of those who will be treated in period $e$ anticipating in period $s$ is important. Further, I also impose magnitude restrictions on anticipatory and treatment effects in different periods.
	\begin{saassumption}
		\label{magnitude:multi}
		$\abs{\tau_g(e,s)}\le\abs{\mu_g(e,t)}$.
	\end{saassumption}

	In Assumption \ref{prob_bound:multi}, I can choose $\pi(e,s)$ not only based on the treatment period, but also based on the benchmark period $s$ I choose. One common strategy in empirical work to deal with anticipation is to argue anticipation only exists within a certain time length before the treatment, and if people have rich-enough data, they can always drop data when people might anticipate. This argument is equivalent to saying I should choose a far-enough period $s$ and choose $\pi(e,s)=0$ for that period in my setup. However, dropping a subset of data and claiming anticipation disappears before this point is not always reasonable. When I need to consider the bound $\pi(e,s)$, one analogy to the choice of $\mathbb{P}[D_i=1]$ in the two-period model that measures the intensity of treatment is given by the proportion of the group that receives treatment no later than period $e$, $\mathbb{P}[E_i\le e]$, which also captures the idea that people might anticipate from information produced by prior implementations of treatment besides their own future treatments. On the other hand, researchers may want to take into consideration the time gap between the benchmark period $s$ and the treatment period $e$ to capture the idea that the further they are from the treatment, the more difficulty people have anticipating and multiply a time discount factor, for example, $\delta^{e-s}$, for a known discounting factor $\delta\in(0,1)$. Thus, one possible choice of $\pi(e,s)=\delta^{e-s}\mathbb{P}[E_i\le e]$. Researchers can choose different $\pi(e,s)$ based on their situations and empirical backgrounds. 
	
	The choice of $s$ also affects the validity of Assumption \ref{magnitude:multi}, because the relative time gap for the periods $t$ and $s$ to $e$ might differ and possibly affect the strength of this magnitude assumption. Although choosing an $s$ that is considerably far from the treatment period to reduce the anticipation probability and restrict the anticipatory effect magnitude may seem interesting, this choice is not cost-free. Sometimes a long period of pre-treatment data is not available. Even when possible, choosing a benchmark period $s$ that is far from the treatment period requires the parallel trends assumption to hold in a relatively long period and increases the risk of violation. 
	
	Based on the assumptions above, I obtain the following result:
	\begin{satheorem}\label{multi}
		Under Assumptions \ref{sample:multi}-\ref{magnitude:multi} and a chosen s satisfying $s<e$ , the parameter of interest $\mu_g(e,t)$ is partially identified via a closed interval in the following form:  $\mu_g(e,t)\in$
		\[m_g(e,s,t)\left[\min\left\{1,\frac{1}{1-\text{sgn}(\tau_g(e,s)\mu_g(e,t))\pi(e,s)}\right\},\right.\]
		\[\left.\max\left\{1,\frac{1}{1-\text{sgn}(\tau_g(e,s)\mu_g(e,t))\pi(e,s)}\right\}\right]\]
		with $m_g(e,s,t)=\mathbb{E}[g(Y_{it})-g(Y_{is})|E_i=e]-\mathbb{E}[g(Y_{it})-g(Y_{is})|E_i=\infty]$.
	\end{satheorem}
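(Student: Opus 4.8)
The plan is to reproduce the logic of the proof of Theorem~\ref{chap2:benchmark}, with the never-treated cohort $E_i=\infty$ playing the role of the control group and the chosen pre-period $s<e$ playing the role of period $0$. As in the benchmark case, I would fix a representative sign configuration, say $0\le\tau_g(e,s)\le\mu_g(e,t)$, derive the bounds there, and then observe that the remaining three sign combinations follow from the same manipulations; the $\min/\max$ together with $\text{sgn}(\tau_g(e,s)\mu_g(e,t))$ in the statement package all four cases at once.

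The core of the argument is to establish the single identity
\[\mu_g(e,t)=m_g(e,s,t)+h(e,s)\,\tau_g(e,s).\]
Starting from $\mu_g(e,t)=\mathbb{E}[g(Y_{it}(0,e))\mid E_i=e]-\mathbb{E}[g(Y_{it}(0,\infty))\mid E_i=e]$, I would identify the first term with the observable $\mathbb{E}[g(Y_{it})\mid E_i=e]$, since for $t\ge e$ the realized outcome of cohort $e$ equals $Y_{it}(0,e)$ (post-treatment the anticipation index is immaterial). For the unobservable counterfactual second term I would add and subtract the pre-period level and invoke Assumption~\ref{paralleltrends:multi} to replace the cohort-$e$ trend $\mathbb{E}[g(Y_{it}(0,\infty))-g(Y_{is}(0,\infty))\mid E_i=e]$ by the never-treated trend $\mathbb{E}[g(Y_{it})-g(Y_{is})\mid E_i=\infty]$, which is observable because the never-treated cohort's realized outcomes coincide with its $(0,\infty)$ potential outcomes.

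The delicate step is the pre-period level $\mathbb{E}[g(Y_{is}(0,\infty))\mid E_i=e]$. By Assumption~\ref{onlychannel:multi} applied at $t=s<e$, the unanticipated pre-period potential outcome coincides with the never-treated one, $Y_{is}(0,e)=Y_{is}(0,\infty)$, so this level equals $\mathbb{E}[g(Y_{is}(0,e))\mid E_i=e]$. But the \emph{observed} pre-period outcome of cohort $e$ is a mixture of anticipators and non-anticipators; writing it out through the definitions of $h(e,s)$ and $\tau_g(e,s)$ gives $\mathbb{E}[g(Y_{is})\mid E_i=e]=\mathbb{E}[g(Y_{is}(0,e))\mid E_i=e]+h(e,s)\,\tau_g(e,s)$. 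Solving for the unanticipated level and substituting back collapses the telescoped expression to the displayed identity.

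With this identity in hand the bounding mirrors the benchmark proof exactly. In the representative case $0\le\tau_g(e,s)\le\mu_g(e,t)$, nonnegativity of $h(e,s)$ and $\tau_g(e,s)$ gives $\mu_g(e,t)\ge m_g(e,s,t)$, while $h(e,s)\le\pi(e,s)$ from Assumption~\ref{prob_bound:multi} combined with $\tau_g(e,s)\le\mu_g(e,t)$ from Assumption~\ref{magnitude:multi} yields $\mu_g(e,t)\le m_g(e,s,t)+\pi(e,s)\,\mu_g(e,t)$, hence $\mu_g(e,t)\le m_g(e,s,t)/(1-\pi(e,s))$; this is precisely the stated interval once the relevant signs are tracked. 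I expect the main obstacle to be purely the bookkeeping of the staggered-adoption notation rather than any new idea: I must ensure the parallel-trends substitution is applied with the correct period pair $(t,s)$ and the never-treated comparison, and that the mixture decomposition at period $s$ uses $h(e,s)$ and the period-$s$ anticipatory effect $\tau_g(e,s)$ consistently, so that the magnitude link to the period-$t$ treatment effect $\mu_g(e,t)$ is legitimate under Assumption~\ref{magnitude:multi}.
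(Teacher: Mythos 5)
Your proposal is correct and takes essentially the same route as the paper's proof: both establish the identity $\mu_g(e,t)=m_g(e,s,t)+h(e,s)\tau_g(e,s)$ via the period-$s$ anticipation mixture, the equality $Y_{is}(0,e)=Y_{is}(0,\infty)$ from Assumption \ref{onlychannel:multi}, and the parallel-trends substitution with the never-treated cohort, then bound the correction term using Assumptions \ref{prob_bound:multi} and \ref{magnitude:multi} in a representative sign case. The only difference is direction of the algebra (you expand $\mu_g(e,t)$ while the paper expands $m_g(e,s,t)$), which is immaterial.
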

	
	In the multi-period model, I end up with an expression similar to the two-period model with one side being the DID estimator proposed by \cite{sun2020estimating} and the other side enlarging or reducing by a specific ratio depending on the signs of the treatment and anticipatory effect and the bound for anticipation probability. Researchers can still implement a sensitivity check by changing the choice of $\pi(e,s)$ and period $s$ to examine whether the conclusion is robust.
	
	The proof strategy carries from a two-period model, and I continue to take the case in which $0\le\tau(e,s)\le\mu(e,t)$ as an example. For the DID estimand $m_g(e,s,t)$,
	\begin{align*}
		&\mathbb{E}[g(Y_{it})-g(Y_{is})|E_i=e]-\mathbb{E}[g(Y_{it})-g(Y_{is})|E_i=\infty]\\
		&=\mathbb{E}[g(Y_{it})|E_i=e]-\mathbb{E}[g(Y_{is})|E_i=e]-\mathbb{E}[g(Y_{it}(0,\infty))-g(Y_{is}(0,\infty))|E_i=\infty]\\
		&=\mathbb{E}[g(Y_{it}(0,e))+A_{it}(g(Y_{it}(1,e))-g(Y_{it}(0,e)))|E_i=e]\\
		&-\mathbb{E}[g(Y_{is}(0,e))+A_{is}(g(Y_{is}(1,e))-g(Y_{is}(0,e)))|E_i=e]\\
		&-\mathbb{E}[g(Y_{it}(0,\infty))-g(Y_{is}(0,\infty))|E_i=\infty]\\
		&=\mathbb{E}[g(Y_{it}(0,e))|E_{i}=e]+h(e,t)\tau(e,t)-\mathbb{E}[g(Y_{is}(0,e))|E_{i}=e]-h(e,s)\tau(e,s)\\
		&-\mathbb{E}[g(Y_{it}(0,\infty))-g(Y_{is}(0,\infty))|E_i=e]\\
		&=\mathbb{E}[g(Y_{it}(0,e))-g(Y_{it}(0,\infty))|E_{i}=e]-h(e,s)\tau(e,s)\\
		&=\mu_g(e,t)-h(e,s)\tau_g(e,s).
	\end{align*}
	From the equation above, I can introduce Assumptions \ref{prob_bound:multi} and \ref{magnitude:multi} and then get 
	\[m_g(e,s,t)\le\mu_g(e,t)\le\frac{m_g(e,s,t)}{1-\pi(e,s)}.\]
\end{proof}

\begin{proof}[Change-in-Changes Setup]
	
	As an alternative approach to the DID model, \cite{athey2006identification} propose a change-in-changes model that does not depend on the scale of dependent variables and recovers the entire counterfactual distribution of effects of the treatment on the treatment group. The change-in-changes model also incorporates nonlinear potential outcomes. The key identifying assumption is a time-invariant distribution of the unobservable variable across different groups. In this section, I generalize the change-in-changes model to incorporate anticipation.
	
	Following the two-period DID setup, suppose n units $i\in\{1,\dots,n\}$ exist with two periods $t\in\{0,1\}$. Each unit is assigned a binary treatment $D_i$ in the second period and has an unobservable binary anticipation status $A_i$ in the first period. The sampling process and potential outcomes have the same restriction as before. However, to incorporate nonlinear outcomes and not use the parallel trends assumption, I follow the assumptions introduced in \cite{athey2006identification}.
	\begin{saassumption}
		\label{outcome:cic}
		The potential outcome of a unit in the absence of both anticipation and treatment for unit i at period t is affected by an unobservable random variable $U_i$ that represents the unit i's characteristic and satisfies
		\[Y_{it}(0,0)=\varphi(U_i,t)\qquad\text{with}\quad\varphi(u,t)\quad\text{strictly increasing in $u$ for}\quad t\in\{0,1\}.\]
	\end{saassumption} 
	\begin{saassumption}
		\label{para:cic}
		$U\indep t|D$ and $\mathbb{U}_{1}\subseteq\mathbb{U}_0$, where $\mathbb{U}_{t}$ represents the support of random variable $U$ in period $t$.
	\end{saassumption}
	
	Assumptions \ref{outcome:cic} and \ref{para:cic} build the change-in-changes model. Assumption \ref{outcome:cic} requires that the potential outcomes can be captured in a single unobservable random variable $U$, and a higher random-variable value leads to a strictly higher potential outcome. Assumption \ref{para:cic} restricts the population to not change over time, because the random variable that determines the potential outcome is independent of the time period conditional on the group it belongs, which is analogous to the parallel trends assumption in DID model. Based on this assumption, the trend in one group can be used to recover the unobservable potential trend of the other group, and comparison is possible. For this nonlinear setup where one can recover the distribution information of the potential outcomes, the parameter of interest I focus on is the quantile treatment effect for the treated group at a given quantile $q$:
	\[\mu(q)=F_{Y^{1}_{i1}(0,1)}^{-1}(q)-F_{Y^{1}_{i1}(0,0)}^{-1}(q)=F_{Y^{1}_{i1}(1)}^{-1}(q)-F_{Y^{1}_{i1}(0)}^{-1}(q)\]
	$F_{Y}$ represents the cumulative distribution function for the random variable $Y$, and $F^{-1}_{Y}(q)$ represents the q-th quantile of the random variable $Y$. The switch from two indexes to one is based on potential outcome restrictions I impose, because the potential outcomes in the second period should not depend on the anticipation status. For notation simplicity, I use $Y^{d}_{it}$ to represent the distribution of the random variable
	$Y_{it}|D_i=d$, and $Y^{d_2}_{it}(d_1)$ for the random variable $Y_{it}(d_1)|D_i=d_2$, and $Y^{d_2}_{it}(a,d_1)$ for the random variable $Y_{it}(a,d_1)|D_i=d_2$. Under this setup, \cite{athey2006identification} proposes the following identification result.
	\begin{lemma}[\cite{athey2006identification} Theorem 3.1] Suppose Assumptions \ref{sampling}, \ref{onlychannel}, \ref{outcome:cic}, and \ref{para:cic} hold. The distribution of $Y^{1}_{i1}(0)$ can be written as
		\[F_{Y^{1}_{i1}(0)}(y)=F_{Y^{1}_{i0}(0,0)}\left(F^{-1}_{Y^{0}_{i0}}\left(F_{Y^{0}_{i1}}(y)\right)\right).\]
	\end{lemma}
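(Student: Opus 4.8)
The plan is to reconstruct the treated group's counterfactual untreated period-1 distribution by first learning, from the control group, the transformation that the common production function $\varphi$ induces on outcome levels across time, and then transporting that transformation to the treated group using the time-invariance of $U$ in Assumption \ref{para:cic}. Write $\varphi_t(\cdot)=\varphi(\cdot,t)$, which is strictly increasing by Assumption \ref{outcome:cic}, and define the period-0-to-period-1 map $k=\varphi_1\circ\varphi_0^{-1}$. The key observation is that for any single unit $Y_{i1}(0,0)=\varphi_1(U_i)=k(\varphi_0(U_i))=k(Y_{i0}(0,0))$, so $k$ deterministically carries an untreated period-0 outcome level into the corresponding untreated period-1 level for the same $U_i$; crucially, $k$ does not depend on the group because $\varphi$ is common across values of $D$.

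First I would identify $k$ from observables in the control group. Under Assumption \ref{onlychannel} and the benchmark restriction that the control group exhibits no anticipatory effect, the observed control outcomes satisfy $Y^0_{i0}=\varphi_0(U)\mid D=0$ and $Y^0_{i1}=\varphi_1(U)\mid D=0$, so that $F_{Y^0_{i0}}(y)=F_{U\mid D=0}(\varphi_0^{-1}(y))$ and $F_{Y^0_{i1}}(y)=F_{U\mid D=0}(\varphi_1^{-1}(y))$. Composing the inverses and cancelling $F_{U\mid D=0}$ then yields $F^{-1}_{Y^0_{i1}}\!\circ F_{Y^0_{i0}}=\varphi_1\circ\varphi_0^{-1}=k$, equivalently $k^{-1}=F^{-1}_{Y^0_{i0}}\circ F_{Y^0_{i1}}$, so the control group pins down $k$ with no treatment contamination.

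Next I would transport $k$ to the treated group. By $U\indep t\mid D$, the conditional law of $U$ given $D=1$ is the same in both periods, so the distribution of $Y^1_{i1}(0)=Y^1_{i1}(0,0)=\varphi_1(U)\mid D=1$ is exactly the pushforward under $k$ of the distribution of the clean period-0 outcome $Y^1_{i0}(0,0)=\varphi_0(U)\mid D=1$. Using monotonicity of $k$,
\[
F_{Y^1_{i1}(0)}(y)=\mathbb{P}\big(k(Y^1_{i0}(0,0))\le y\mid D=1\big)=F_{Y^1_{i0}(0,0)}\big(k^{-1}(y)\big),
\]
and substituting $k^{-1}=F^{-1}_{Y^0_{i0}}\circ F_{Y^0_{i1}}$ gives the claimed formula. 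The only role of anticipation here is to replace the contaminated observed treated period-0 outcome by the clean potential outcome $Y^1_{i0}(0,0)$; the remainder is the \cite{athey2006identification} construction.

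I expect the main obstacle to be the careful handling of the inverse maps at the edges of the supports rather than the algebra of the composition. Since $\varphi_t$ and the law of $U$ need not be continuous or strictly increasing everywhere, the $F^{-1}$ objects must be read as generalized (quantile) inverses, and one must verify that the cancellations $\varphi_0^{-1}\circ\varphi_0=\mathrm{id}$ and $F^{-1}_{U\mid D=0}\circ F_{U\mid D=0}=\mathrm{id}$ hold on the relevant ranges. This is precisely where the support condition $\mathbb{U}_1\subseteq\mathbb{U}_0$ in Assumption \ref{para:cic} does the work: it guarantees that every quantile needed for the treated period-1 outcome corresponds to a $u$ also realized by the control group in period 0, so that $k^{-1}$ is defined on the whole range of relevant $y$ and the transported object is a genuine distribution function. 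I would carry the strictly-increasing, continuous case in the main line of argument and confine the generalized-inverse bookkeeping under $\mathbb{U}_1\subseteq\mathbb{U}_0$ to a concluding remark.
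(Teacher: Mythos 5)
Your proposal is correct and coincides with the standard construction behind this result: the paper does not reprove the lemma but imports it as Theorem 3.1 of \cite{athey2006identification}, whose proof is precisely your argument of identifying \(k=\varphi_1\circ\varphi_0^{-1}\) from the control group as \(F^{-1}_{Y^{0}_{i1}}\circ F_{Y^{0}_{i0}}\) and transporting it to the treated group via \(U\indep t\mid D\), with the only paper-specific wrinkle being the one you flag, namely that anticipation forces the clean potential outcome \(Y^{1}_{i0}(0,0)\) to stand in for the contaminated observed treated period-0 outcome. One further point in your favor: your reading of the support condition in Assumption \ref{para:cic} as a cross-group requirement (every \(u\) relevant for the treated group is realized in the control group, so the generalized inverses are defined where needed) is the substantively correct one from \cite{athey2006identification}, even though the paper's statement phrases \(\mathbb{U}_t\) as a per-period support.
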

	
	The potential outcome $Y^{1}_{i0}(0,0)$ should have the same distribution as $Y^{1}_{i0}$ without anticipation, and thus, the quantiles of $Y^{1}_{i1}(0)$ and $\mu(q)$ are identified. However, from the previous discussion, we know the existence of anticipation causes  pre-treatment distortions, and the outcomes observed are no longer good measures of potential outcomes in the absence of treatment and anticipation. Following the logic in the previous section, I need some extra assumptions to help bound the quantile treatment effect:
	\[\mu(q)=F_{Y^{1}_{i1}(1)}^{-1}(q)-F_{Y^{1}_{i1}(0)}^{-1}(q)=F_{Y^{1}_{i1}(1)}^{-1}(q)-F^{-1}_{Y^{0}_{i1}}\left(F_{Y^{0}_{i0}}\left(F^{-1}_{Y^{1}_{i0}(0,0)}(q)\right)\right)\]
	The first term can be estimated from the treated group, whereas the second term  $F^{-1}_{Y^{1}_{i0}(0,0)}(q)$ is not identified from the data. Here, I introduce the corresponding quantile anticipatory effect \[\tau(q)=F^{-1}_{Y^{1}_{i0}(1,1)}(q)-F^{-1}_{Y^{1}_{i0}(0,0)}(q)\] and use a similar strategy.
	\begin{saassumption} 
		\label{prob_bound:cic}
		$\mathbb{P}[A_i=1|D_i=1]\le\pi$.
	\end{saassumption}
	\begin{saassumption} 
		\label{magnitude:cic}
		$\abs{\tau(q)}\le\abs{\mu(q)}$.
	\end{saassumption}

	Assumption \ref{prob_bound:cic} is identical to the assumption I impose in the two-period DID model. Assumption \ref{magnitude:cic} is also similar to what has been imposed, and the only difference is that I now restrict the magnitude relationship between the quantile anticipatory effect and quantile treatment effect. Based on the assumptions above, I can build bounds for the quantile treatment effect in the change-in-changes model while incorporating anticipation.
	\begin{satheorem}\label{CIC}
		Under Assumptions \ref{sampling} and \ref{onlychannel} and Assumptions \ref{outcome:cic}-\ref{magnitude:cic}, the parameter of interest, $\mu(q)$, is partially identified via a closed interval. For the sake of notation simplicity, let us define
		\[m(q)=F_{Y^{1}_{i1}(1)}^{-1}(q)-F^{-1}_{Y^{0}_{i1}}\left(F_{Y^{0}_{i0}}\left(F^{-1}_{Y^{1}_{i0}}(q)\right)\right)\]
		\[\phi_u(q) \text{is the closest to zero solution for}\quad F_{Y^{1}_{i1}(1)}^{-1}(q)-F^{-1}_{Y^{0}_{i1}}\left(F_{Y^{0}_{i0}}\left(F^{-1}_{Y^{1}_{i0}}(q)-x\right)\right)-x=0\]
		\[\phi_l(q) \text{is the closest to zero solution for}\quad F_{Y^{1}_{i1}(1)}^{-1}(q)-F^{-1}_{Y^{0}_{i1}}\left(F_{Y^{0}_{i0}}\left(F^{-1}_{Y^{1}_{i0}}(q)+x\right)\right)-x=0.\]
		With the restriction that $\phi_u(q)$ and $\phi_l(q)$ have the same signs as $\mu(q)$, further define
		\[\tilde{\phi}_u(q)=
		\left\{
		\begin{array}{lr}
			F_{Y^{1}_{i1}(1)}^{-1}(q)-F^{-1}_{Y^{0}_{i1}}\left(F_{Y^{0}_{i0}}\left(F^{-1}_{Y^{1}_{i0}}(q-\pi)\right)\right) & q>\pi\\
			+\infty & q\le\pi
		\end{array}
		\right.\]
		\[
		\tilde{\phi}_l(q)=\left\{
		\begin{array}{lr}
			F_{Y^{1}_{i1}(1)}^{-1}(q)-F^{-1}_{Y^{0}_{i1}}\left(F_{Y^{0}_{i0}}\left(F^{-1}_{Y^{1}_{i0}}(q+\pi)\right)\right) & q<1-\pi\\
			-\infty & q\ge1-\pi.
		\end{array}
		\right.\]
		Then we have
		\[0\le\tau(q)\le\mu(q)\qquad\mu(q)\in[m(q),\min\{\phi_u(q),\tilde{\phi}_u(q)\}]\]
		\[\tau(q)\le0\le\mu(q)\qquad\mu(q)\in[\max\{\phi_{l}(q),\tilde{\phi}_{l}(q)\},m(q)]\]
		\[\mu(q)\le0\le\tau(q)\qquad\mu(q)\in[m(q),\min\{\phi_l(q),\tilde{\phi}_u(q)\}]\]
		\[\mu(q)\le\tau(q)\le0\qquad\mu(q)\in[\max\{\phi_{u}(q),\tilde{\phi}_{l}(q)\},m(q)].\]
	\end{satheorem}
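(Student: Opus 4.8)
The plan is to isolate the single object that anticipation renders non-identified, bound it in two independent ways—once from the probability restriction and once from the magnitude restriction—and then transport those bounds through the monotone change-in-changes map into bounds on $\mu(q)$. Write $T(y)=F^{-1}_{Y^{0}_{i1}}(F_{Y^{0}_{i0}}(y))$ for the change-in-changes transformation; since $F_{Y^{0}_{i0}}$ and $F^{-1}_{Y^{0}_{i1}}$ are nondecreasing, $T$ is nondecreasing. By the \cite{athey2006identification} lemma quoted above, every ingredient of $\mu(q)=F^{-1}_{Y^{1}_{i1}(1)}(q)-T\!\left(F^{-1}_{Y^{1}_{i0}(0,0)}(q)\right)$ is identified except the uncontaminated pre-treatment treated quantile $z:=F^{-1}_{Y^{1}_{i0}(0,0)}(q)$. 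Because $T$ is nondecreasing, $\mu(q)$ is nonincreasing in $z$, so any interval for $z$ maps to an interval for $\mu(q)$ with endpoints swapped; the whole proof reduces to bounding $z$. The structural fact I would record first is that, by Assumption~\ref{onlychannel} (so that $Y_{i0}(0,1)=Y_{i0}(0,0)$) together with the definition of $A_i$, the observed pre-treatment treated distribution is the two-component mixture $F_{Y^{1}_{i0}}=(1-p)F_{Y^{1}_{i0}(0,0)}+p\,F_{Y^{1}_{i0}(1,1)}$ with $p=\mathbb{P}[A_i=1\mid D_i=1]\le\pi$.

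For the probability bounds $\tilde\phi_u,\tilde\phi_l$, I would use only the mixture together with $p\le\pi$ and $0\le F_{Y^{1}_{i0}(1,1)}\le1$. These give the pointwise envelope $F_{Y^{1}_{i0}}(y)-\pi\le F_{Y^{1}_{i0}(0,0)}(y)\le F_{Y^{1}_{i0}}(y)+\pi$ for every $y$. Inverting the envelope yields $F^{-1}_{Y^{1}_{i0}}(q-\pi)\le z\le F^{-1}_{Y^{1}_{i0}}(q+\pi)$, with the convention that the left endpoint is $-\infty$ when $q\le\pi$ and the right endpoint is $+\infty$ when $q\ge1-\pi$. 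Applying $F^{-1}_{Y^{1}_{i1}(1)}(q)-T(\cdot)$ and using monotonicity turns the upper (resp.\ lower) bound on $z$ into the lower bound $\tilde\phi_l(q)$ (resp.\ upper bound $\tilde\phi_u(q)$), reproducing the two branch definitions including the $\pm\infty$ conventions.

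For the magnitude bounds $\phi_u,\phi_l$, I would first prove a sandwiching lemma: for any mixture $F=(1-p)G+pH$ the quantile $F^{-1}(q)$ lies between $G^{-1}(q)$ and $H^{-1}(q)$, proved by evaluating $F$ at $G^{-1}(q)$ and at $H^{-1}(q)$ and comparing with $q$. Applied to the mixture above this gives $\abs{F^{-1}_{Y^{1}_{i0}}(q)-z}\le\abs{F^{-1}_{Y^{1}_{i0}(1,1)}(q)-z}=\abs{\tau(q)}\le\abs{\mu(q)}$, the last step being Assumption~\ref{magnitude:cic}. The sign of $\tau(q)$ fixes on which side of $F^{-1}_{Y^{1}_{i0}}(q)$ the unknown $z$ sits. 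Setting $x=F^{-1}_{Y^{1}_{i0}}(q)-z$ (with the opposite sign in the case $\tau(q)\le0$) and substituting $\mu(q)=F^{-1}_{Y^{1}_{i1}(1)}(q)-T(F^{-1}_{Y^{1}_{i0}}(q)\mp x)$ converts the self-referential inequality $\abs{x}\le\abs{\mu(q)}$ into the requirement that $x$ lie between $0$ and the relevant root of $F^{-1}_{Y^{1}_{i1}(1)}(q)-T(F^{-1}_{Y^{1}_{i0}}(q)\mp x)-x=0$; the binding endpoint is exactly $\phi_u(q)$ or $\phi_l(q)$, and evaluating $\mu(q)=F^{-1}_{Y^{1}_{i1}(1)}(q)-T(\cdot)$ there recovers the stated expression.

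Finally I would combine the two sets of bounds by intersection—keeping the larger of the two lower bounds and the smaller of the two upper bounds—and verify the four sign configurations one at a time, doing $0\le\tau(q)\le\mu(q)$ in full (there $z\le F^{-1}_{Y^{1}_{i0}}(q)$, so $\mu(q)\ge m(q)$, while the displacement cap gives $\mu(q)\le\min\{\phi_u(q),\tilde\phi_u(q)\}$) and obtaining the remaining three by symmetry. The main obstacle is the magnitude step: because $T$ is nonlinear, the map $x\mapsto F^{-1}_{Y^{1}_{i1}(1)}(q)-T(F^{-1}_{Y^{1}_{i0}}(q)\mp x)-x$ need not be monotone, so the equation defining $\phi_u,\phi_l$ may have several roots. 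This is precisely why the theorem selects the \emph{closest-to-zero} root carrying the same sign as $\mu(q)$: I would have to argue that, starting from the observed quantile ($x=0$, i.e.\ $\mu(q)=m(q)$) and increasing $\abs{x}$, the first crossing is the binding constraint, that it has the correct sign, and that no farther root can tighten or invalidate the interval. Pushing this monotone-threshold argument through a non-monotone transformation is the delicate part of the proof.
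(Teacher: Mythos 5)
Your proposal is correct and follows essentially the same route as the paper's own proof: reduce everything to bounding the uncontaminated quantile $z=F^{-1}_{Y^{1}_{i0}(0,0)}(q)$ and transport through the monotone map $T(y)=F^{-1}_{Y^{0}_{i1}}(F_{Y^{0}_{i0}}(y))$, with the probability restriction yielding the $\tilde{\phi}$ bounds via the mixture envelope (the paper's ``at most $\pi$ of the units exceed the original $q$-th quantile'' argument) and the magnitude restriction yielding the fixed-point $\phi$ bounds — indeed your sandwiching lemma supplies a proof of the inequality $F^{-1}_{Y^{1}_{i0}}(q)\le F^{-1}_{Y^{1}_{i0}(0,0)}(q)+\tau(q)$ that the paper simply asserts. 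The one step you flag as delicate, selecting the closest-to-zero root when $x\mapsto F^{-1}_{Y^{1}_{i1}(1)}(q)-T(F^{-1}_{Y^{1}_{i0}}(q)\mp x)-x$ is not monotone, is exactly the point the paper dispatches with its brief observation that at $x=0$ the left-hand side is weakly below the right-hand side and both sides of the defining equation are increasing, so your treatment is, if anything, more candid than the paper's about where the remaining rigor lies.
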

	\begin{proof}[Proof of Theorem A. \ref{CIC}]
		
		As before, I take the case $0\le\tau(q)\le\mu(q)$ for an  example. For the lower bound, I know $\tau(q)\ge0$, so I can conclude $F^{-1}_{Y^{1}_{i0}(0,0)}(q)\le F^{-1}_{Y^{1}_{i0}}(q)$ and have 
		\[\mu(q)\ge F_{Y^{1}_{i1}(1)}^{-1}(q)-F^{-1}_{Y^{0}_{i1}}\left(F_{Y^{0}_{i0}}\left(F^{-1}_{Y^{1}_{i0}}(q)\right)\right),\]
		which is identifiable from observable variables. On the other hand, I can build the upper bound for $\mu(q)$ through $F^{-1}_{Y^{1}_{i0}}(q)\le F^{-1}_{Y^{1}_{i0}(0,0)}(q)+\tau(q)\le F^{-1}_{Y^{1}_{i0}(0,0)}(q)+\mu(q)$ using magnitude restriction. Then, the upper bound for $\mu(q)$, represented by $\phi_{u}(q)$, can be solved by the equation
		\[\phi_u(q)= F_{Y^{1}_{i1}(1)}^{-1}(q)-F^{-1}_{Y^{0}_{i1}}(F_{Y^{0}_{i0}}(F^{-1}_{Y^{1}_{i0}}(q)-\phi_u(q))).\]
		When $\phi_u(q)=0$, the left is smaller than or equal to the right hand side. Both sides are increasing functions, so the minimum nonnegative solution should be $\phi_u(q)$. Note $\phi_u(q)$ is not guaranteed to exist in this approach, and it depends on the distribution of all these random variables. For this approach, I do not use the information from Assumption \ref{magnitude:cic}. If I would like to introduce that assumption, I can make an improvement on $\phi_u(q)$ for some circumstances.
		
		If $q\le\pi$, I can do little to improve the bound as the worst case is that all the people who anticipate are in the lowest q percentage and no information about the original distribution without anticipation can be gained. If $q>\pi$, at most $\pi$ of the people exceed the original q-th quantile of $Y^{1}_{i0}(0,0)$ after the anticipation, so I should have that $F^{-1}_{Y^{1}_{i0}}(q-\pi)\le F^{-1}_{Y^{1}_{i0}(0,0)}(q)$, and then I have
		\[\mu(q)\le F_{Y^{1}_{i1}(1)}^{-1}(q)-F^{-1}_{Y^{0}_{i1}}\left(F_{Y^{0}_{i0}}\left(F^{-1}_{Y^{1}_{i0}}(q-\pi)\right)\right)=\tilde{\phi}_u(q).\]
		For the case $0\le\tau(q)\le\mu(q)$, the lower and upper bounds for $\mu(q)$ are
		\[\theta_l(q)=F_{Y^{1}_{i1}(1)}^{-1}(q)-F^{-1}_{Y^{0}_{i1}}\left(F_{Y^{0}_{i0}}\left(F^{-1}_{Y^{1}_{i0}}(q)\right)\right)\qquad\theta_u(q)=\min\{\phi_u(q),\tilde{\phi}_u(q)\}.\]
	\end{proof}

	From the form of the intervals, one can find that the way to build bounds for the quantile treatment effect is different from what I have done in the linear outcome case. The key difference is that I need both of these two assumptions to build the identified set in the DID model, whereas in the change-in-changes setup, either restriction on anticipation probability or magnitude anticipation is possible to provide an identified set. $\phi_u(q)$ and $\phi_l(q)$ are possible bounds based on the magnitude restrictions following the idea that to the maximum extent, the pre-treatment potential outcomes without anticipation and treatment deviate from the observed pre-treatment outcomes up to a magnitude that is equal to the treatment effect. If one can find a solution to the formula, the solutions will be the corresponding bounds for the treatment effect. $\tilde{\phi}_u(q)$ and $\tilde{\phi}_l(q)$ are bounds obtained from the restrictions on the anticipation probability. The intuition for the bounds is that if one is interested in the performance of the $q$-th quantile treatment effect and the proportion of those who anticipate is up to $\pi$, then at least $(q-\pi)$ of the observation is not affected by anticipation. For example, if the anticipation effect is positive, the $q$-th quantile of potential outcomes without anticipation and treatment should be no less than the $(q-\pi)$-th quantile of observed outcomes, because at least these parts of the units do not anticipate, and no distortion exists. Similar logic can be used to analyze other cases. Although relying on fewer assumptions seems to represent an advantage over the DID model, and having two approaches can provide a tighter bound if researchers are willing to impose both assumptions, this approach has its own problem. Neither the solution to the formula nor the effectiveness of the bound achieved from the probability restriction are guaranteed. The former depends on the specific distribution properties of the potential outcomes and support of random variables, whereas the latter depends on the relationship between $q$ one is interested in and $\pi$ one chooses. For example, if $q\le\pi$, a negative quantile does not provide any useful information and this bound is useless. For the purpose of covering more cases, I suggest imposing two assumptions and choosing the tighter one as the identified set while applying this method.	
\end{proof}

\begin{proof}[Empirical Related Tables]
	This section provides extra empirical results for the empirical application using data from subject-specific teachers.
	\begin{table}[htbp]	
		\centering
		\begin{threeparttable}
			\caption{Effects of the Early Retirement Incentive Program on Scores}
			
			\begin{tabular}{ccccc}
				\toprule
				\toprule
				& \multicolumn{2}{c}{Original Results} & \multicolumn{2}{c}{With Anticipation} \\
				\cmidrule{2-5}          & Math  & Reading & Math  & Reading \\
				\midrule
				All Grade & 0.013 & 0.013 & [0.008,0.013] & [0.008,0.013] \\
				& (0.008) & (0.007) &       &  \\
				& [-0.002,0.028] & [-0.001,0.027] & [-0.005,0.026] & [-0.004,0.025] \\
				Grade 3 & 0.01  & -0.003 & [0.007,0.010] & [-0.007,-0.003] \\
				& (0.013) & (0.01)  &       &  \\
				& [-0.015,0.035] & [-0.024,0.017] & [-0.016,0.033] & [-0.05,0.04] \\
				Grade 6 & 0.004 & 0.016 & [0.002,0.004] & [0.010,0.016] \\
				& (0.01)  & (0.009) &       &  \\
				& [-0.016,0.024] & [-0.002,0.035] & [-0.017,0.023] & [-0.006,0.033] \\
				Grade 8 & 0.032 & 0.03  & [0.020,0.032] & [0.019,0.03] \\
				& (0.018) & (0.017) &       &  \\
				& [-0.004,0.067] & [-0.003,0.063] & [-0.011,0.063] & [-0.01,0.059] \\
				\bottomrule
			\end{tabular}%
			
			\label{subject}%
			\begin{tablenotes}
				\small 
				\item \textbf{Notes:} This table contains results using data from subject specific teachers. Each column presents results from a separate regression. Teachers who teach multiple grades are included in each grade. Teachers who teach in self-contained classrooms are assumed to teach both math and English. I list identified sets in the first row and 95\% level confidence sets in the third row for each result with anticipation. For comparison purposes, I also provide estimators, standard errors, and 95\% confidence intervals for results from \cite{fitzpatrick2014early}. Standard errors are displayed with parentheses.
			\end{tablenotes}
		\end{threeparttable}
	\end{table}%
\end{proof}

\begin{proof}[Estimation and Inference]
	
	Here I calculate these estimators for inference explicitly. To be consistent with the setup in the empirical application, I work on the case in which $\tau_g\le0\le\mu_g$ as an example, and the other cases can be analyzed similarly. With these assumptions, I know $\mu_g\in[\mu_{g,l},\mu_{g,u}]$, where $\mu_{g,l}=\frac{m_g}{1+\pi}$ while $\theta_{u}=m_g$.
	Corresponding estimators for the lower and upper bounds of the interval will be
	\begin{eqnarray*}
		\hat{\mu}_{g,u}&=&\frac{1}{n_1}\sum_{i=1}^{n}[g(Y_{i1})-g(Y_{i0})]D_i-\frac{1}{n_0}\sum_{i=1}^{n}[g(Y_{i1})-g(Y_{i0})](1-D_i)\qquad\\ 
		\hat{\mu}_{g,l}&=&\frac{\hat{\mu}_{g,u}}{1+\hat{\pi}}\qquad n_1=\sum_{i=1}^{n}D_i\qquad n_0=n-n_1\qquad\hat{\pi}\ \text{is a consistent estimator for $\pi$.}
	\end{eqnarray*}
	For simplicity, I define $\hat{\delta}_{dt}=\frac{1}{n_d}\sum_{k=1}^{n}g(Y_{kt})\mathbb{I}[D_k=d]$ for $(d,t)\in\{0,1\}^{2}$. I use $d$ to index the group and $t$ for time period. I can also define \[\hat{\sigma}_{dt}^{2}=\frac{1}{n_d-1}\sum_{k=1}^{n}[g(Y_{kt})-\hat{\delta}_{dt}]^{2}\mathbb{I}[D_k=d]\]
	\[\hat{\text{cov}}_{d}=\frac{1}{n_d-1}\sum_{k=1}^{n}[g(Y_{k1})-\hat{\delta}_{d1}][g(Y_{k0})-\hat{\delta}_{d0}]\mathbb{I}[D_k=d],\] which measures the variance for group d at period t and covariance between two time periods for group d. For well-behaved function $g(.)$, which guarantees the asymptotic normality of average, I have
	\[\hat{\sigma}_{u}^{2}=\frac{\hat{\sigma}_{11}^{2}+\hat{\sigma}_{10}^{2}-2\hat{\text{cov}}_{1}}{\hat{p}}+\frac{\hat{\sigma}_{01}^{2}+\hat{\sigma}_{00}^{2}-2\hat{\text{cov}}_{0}}{1-\hat{p}}\]
	\[\hat{\sigma}_{l}^{2}=\frac{\hat{\sigma}_{11}^{2}+\hat{\sigma}_{10}^{2}-2\hat{\text{cov}}_{1}}{\hat{p}(1+\hat{\pi})^{2}}+\frac{\hat{\sigma}_{01}^{2}+\hat{\sigma}_{00}^{2}-2\hat{\text{cov}}_{0}}{(1-\hat{p})(1+\hat{\pi})^{2}}\]
	with $\hat{p}=\frac{n_1}{n}$.
	
	For the validity of the inference procedure, when Assumptions (i) and (ii) are satisfied and the upper- and lower-bound estimators are ordered by construction, the procedure of \cite{imbens2004confidence} is valid according to \cite{stoye2009more}. All that remains to be addressed is that because the variances of the lower and upper bounds might change so to guarantee the effectiveness of the confidence set, one needs to choose the larger variance for both bounds.\\
	\begin{proof}[Proof of Theorem \ref{infer}]
		
		To keep consistency with the estimation procedure, I focus on the case
		\[\mu_g\in\left[\min\left\{m_g,\frac{m_g}{1+\pi}\right\},\max\left\{m_g,\frac{m_g}{1+\pi}\right\}\right]\qquad\hat{\mu}_{g,u}=\hat{m}_g\qquad\hat{\mu}_{g,l}=\frac{\hat{\mu}_{g,u}}{1+\hat{\pi}}.\]
		If $m_g>0$ and thus $\mu_g>0$, then based on Assumptions (i) and (ii) plus the ordered upper- and lower-bound estimators, the proposed interval is effective following \cite{imbens2004confidence} and \cite{stoye2009more}. However, if $m_g<0$ and thus $\mu_g<0$ but I get $\hat{m}_g>0$, then although $\hat{\mu}_{g,u}$ is still larger, now it is the estimator for the lower bound. For notation simplicity, use $\lambda$ to represent $\mathbb{P}[A_i=1|D_i=1]$, and I know $\lambda\in[0,\pi]$. Define $\sigma=\max\{\sigma_l,\sigma_u\}$, where $\sigma$ is the standard deviation of $m_g$:
		\begin{eqnarray*}
			&\mathbb{P}&\left(\frac{\hat{m}_g}{1+\hat{\pi}}-C_{n}\frac{\hat{\sigma}}{\sqrt{n}}\le\frac{m_g}{1+\lambda}\le\hat{m}_g+C_{n}\frac{\hat{\sigma}}{\sqrt{n}}\right)\\
			=&\mathbb{P}&\left(\frac{\sqrt{n}\frac{\hat{m}_g(\lambda-\hat{\pi})}{1+\hat{\pi}}-C_{n}\hat{\sigma}(1+\lambda)}{\sigma}\le\sqrt{n}\frac{m_g-\hat{m}_g}{\sigma}\le\frac{\sqrt{n}\lambda\hat{m}_g+C_{n}\hat{\sigma}(1+\lambda)}{\sigma}\right).
		\end{eqnarray*}
		For $\varepsilon>0$, $N_0$ exists such that $N>N_0$, I have $\lvert\frac{\hat{\sigma}-\sigma}{\sigma}\rvert<\varepsilon$, and thus, $\varepsilon>1-\frac{\hat{\sigma}}{\sigma}$. The probability then satisfies
		\begin{eqnarray*}
			&\ge&\mathbb{P}\left(\frac{\sqrt{n}\frac{\hat{m}_g(\lambda-\hat{\pi})}{1+\hat{\pi}}-C_{n}\sigma(1+\lambda)(1-\varepsilon)}{\sigma}\le\sqrt{n}\frac{m_g-\hat{m}_g}{\sigma}\le\frac{\sqrt{n}\lambda\hat{m}_g+C_{n}\sigma(1+\lambda)(1-\varepsilon)}{\sigma}\right).
		\end{eqnarray*}
		Use $\Phi$ to represent the cumulative value of the standard normal distribution and $\phi$ to represent the p.d.f for the standard normal here. By Berry-Essen's central limit theorem, this term is arbitrarily close to 
		\begin{eqnarray*}
			&\Phi&\left(\frac{\sqrt{n}\lambda\hat{m}_g}{\sigma}+C_{n}(1+\lambda)(1-\varepsilon)\right)-\Phi\left(\frac{\sqrt{n}\frac{\hat{m}_g(\lambda-\hat{\pi})}{1+\hat{\pi}}}{\sigma}-C_{n}(1+\lambda)(1-\varepsilon)\right)\\
			=&\Phi&\left(\frac{\sqrt{n}\lambda\hat{m}_g}{\sigma}+C_{n}(1+\lambda)\right)-\Phi\left(\frac{\sqrt{n}\frac{\hat{m}_g(\lambda-\hat{\pi})}{1+\hat{\pi}}}{\sigma}-C_{n}(1+\lambda)\right)+2(1+\lambda)\varepsilon C_n\phi(\omega)
		\end{eqnarray*}
		for some $\omega$. $C_n$ is bounded and $\varepsilon$ can be arbitrarily small, so the last term can be ignored. Using similar logic, the left term can be written as
		\begin{eqnarray*}
			&\ge&\Phi\left(\frac{\sqrt{n}\lambda\hat{m}_g}{\hat{\sigma}}+C_{n}(1+\lambda)\right)-\Phi\left(-\frac{\sqrt{n}\frac{\hat{m}_g(\hat{\pi}-\lambda)}{1+\hat{\pi}}}{\hat{\sigma}}-C_{n}(1+\lambda)\right)\\
			&-&C_0\varepsilon\lambda\left(\sqrt{n}\frac{\hat{m}_g-m_g}{\hat{\sigma}}+\sqrt{n}\frac{m_g}{\hat{\sigma}}\right)\phi(\omega^{\prime})
		\end{eqnarray*}
		for some other $\omega^{\prime}$ and constant number $C_0$. The first term within parentheses is normally distributed, the second term is negative, and I can take $\varepsilon$ arbitrarily small. Recall that $\lambda\in[0,\pi]$ and the smallest value is taken at $\lambda=0$. Thus,
		\begin{eqnarray*}
			&\mathbb{P}&\left(\frac{\hat{m}_g}{1+\hat{\pi}}-C_{n}\frac{\hat{\sigma}}{\sqrt{n}}\le\frac{m_g}{1+\lambda}\le\hat{m}_g+C_{n}\frac{\hat{\sigma}}{\sqrt{n}}\right)\\
			\ge&\Phi&\left(C_{n}\right)-\Phi\left(-\frac{\sqrt{n}(\hat{\mu}_{g,u}-\hat{\mu}_{g,l})}{\hat{\sigma}}-C_{n}\right)=\alpha.
		\end{eqnarray*}
	\end{proof}

	\begin{proof}[Proof of Corollary \ref{null0}]
		
		Take the case $\tau_g\le0\le\mu_g$ for example. The new confidence set has the form
		\[\left[\frac{\hat{\mu}_{g,u}}{1+\hat{\pi}}-C_n\frac{\hat{\sigma}}{\sqrt{n}},\hat{\mu}_{g,u}+C_n\frac{\hat{\sigma}}{\sqrt{n}}\right]\]
		with 
		\[\Phi\left(C_n+\sqrt{n}\frac{\hat{\mu}_{g,u}-\hat{\mu}_{g,l}}{\hat{\sigma}}\right)-\Phi(-C_n)=\Phi\left(C_n+\frac{\hat{\pi}}{1+\hat{\pi}}\tilde{t}\right)-\Phi(-C_n)=\alpha.\]
		Given that the right hand side of the confidence set is always positive, I only need to compare $\frac{\hat{\mu}_{g,u}}{1+\hat{\pi}}-C_n\frac{\hat{\sigma}}{\sqrt{n}}$ and 0:
		\[\frac{\hat{\mu}_{g,u}}{1+\hat{\pi}}-C_n\frac{\hat{\sigma}}{\sqrt{n}}=\frac{\hat{\sigma}}{(1+\hat{\pi})\sqrt{n}}(\tilde{t}-C_n(1+\hat{\pi})).\]
		I know $\Phi\left(C_n+\frac{\hat{\pi}}{1+\hat{\pi}}\tilde{t}\right)-\Phi(-C_n)$ is increasing in both $C_n$ and $\tilde{t}$, so the solution for $C_n$ is a decreasing function for $\tilde{t}$ at any given $\pi$. For any specific $\pi$, the smallest $\tilde{t}$ that guarantees $\frac{\tilde{t}}{1+\pi}\ge C_n$ is the value that solves
		\[\Phi\left(\tilde{t}\right)-\Phi\left(-\frac{\tilde{t}}{1+\pi}\right)=\alpha.\]
		The left hand side is an increasing function in $\tilde{t}$ and a decreasing function in $\pi$, so the worst scenario happens at $\pi=1$, giving the expression $\Phi(t^{*})-\Phi(-t^{*}/2)=\alpha$. As long as $\tilde{t}>t^{*}$, $\frac{\tilde{t}}{1+\pi}>C_n$ is guaranteed and I can conclude $0\not\in CS_{\alpha}^{\mu}$.
	\end{proof}

\end{proof}

\begin{proof}[Alternative Assumptions and Corresponding Bounds]
	
	Sometimes, proposing specific assumptions other than what have already been assumed may be reasonable. In this part, I discuss several alternative assumptions that will also give partial identification results under different circumstances.
	
	The first alternative approach mentioned here identifies the parameter of interest through a boundary on the outcomes. The advantage is that this assumption satisfies naturally under certain setups, for example, binary outcomes and setups when the outcomes are scores. Using this approach does not need to make discussions based on different signs of the treatment effects and does not need to bound the magnitude of effect. The approach will also benefit from a certain choice of bounded $g(.)$ function. However, I may need some other assumptions to validate it.
	\begin{saassumption}
		\label{homo}
		$\mathbb{E}[g(Y_{i0}(0,0))|D_i=1,A_i=0]=\mathbb{E}[g(Y_{i0}(0,0))|D_i=1,A_i=1]$.
	\end{saassumption}
	This assumption states that for those who will get treated, their potential outcome without treatment in the first period should be the same across those who anticipate and those who do not. This assumption is weaker than an ``independent anticipation'' assumption, and I only impose this restriction for the treated group.
	\begin{saassumption}
		\label{outcomebound}
		The outcome variable in the first period is bounded, which means $g(Y_{i0})\in[a,b]$.
	\end{saassumption}
	
	The bounded outcome assumption is not quite restrictive and fits into different situations, for example, the binary outcome case. It will naturally have a bounded outcome between 0 and 1. For example, in other cases, when the outcome is score between 0 and 100 or the $g(.)$ function itself is bounded, this assumption is automatically satisfied.
	\begin{satheorem}
		Under Assumptions \ref{sampling}-\ref{paralleltrends}, and  Assumptions \ref{homo} and \ref{outcomebound}, the parameter of interest $\mu_g$ is partially identified in a closed interval. The lower and upper bounds of that interval $\mu_{g,l}$, $\mu_{g,u}$ have the following form:
		\[\mu_{g,l}=\mathbb{E}\left[\frac{D_i-\mathbb{P}[D_i=1]}{\mathbb{P}[D_i=1](1-\mathbb{P}[D_i=1])}g(Y_{i1})+\frac{1-D_i}{1-\mathbb{P}[D_i=1]}g(Y_{i0})-b\right]\]
		\[\mu_{g,u}=\mathbb{E}\left[\frac{D_i-\mathbb{P}[D_i=1]}{\mathbb{P}[D_i=1](1-\mathbb{P}[D_i=1])}g(Y_{i1})+\frac{1-D_i}{1-\mathbb{P}[D_i=1]}g(Y_{i0})-a\right]\].
	\end{satheorem}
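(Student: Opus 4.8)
The plan is to follow the decomposition used in the proof of Theorem \ref{chap2:benchmark}, isolate the single unidentified object, collapse it with Assumption \ref{homo}, bound it with Assumption \ref{outcomebound}, and then rewrite the observable part in the inverse-probability-weighting form stated in the theorem. First I would reproduce the identity from the proof of Theorem \ref{chap2:benchmark}: combining Assumptions \ref{sampling}, \ref{onlychannel}, and \ref{paralleltrends} gives
\[
\mu_g=\mathbb{E}[g(Y_{i1})|D_i=1]-\mathbb{E}[g(Y_{i1})|D_i=0]+\mathbb{E}[g(Y_{i0})|D_i=0]-\mathbb{E}[g(Y_{i0}(0))|D_i=1],
\]
where I use that the control group is free of anticipatory distortion, so $\mathbb{E}[g(Y_{i1}(0))|D_i=0]=\mathbb{E}[g(Y_{i1})|D_i=0]$ and $\mathbb{E}[g(Y_{i0}(0))|D_i=0]=\mathbb{E}[g(Y_{i0})|D_i=0]$, together with $\mathbb{E}[g(Y_{i1}(1))|D_i=1]=\mathbb{E}[g(Y_{i1})|D_i=1]$. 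Writing $C$ for the sum of the first three (observable) terms, the entire identification problem reduces to bounding the one unidentified term $\mathbb{E}[g(Y_{i0}(0))|D_i=1]$, since $\mu_g=C-\mathbb{E}[g(Y_{i0}(0))|D_i=1]$.

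Next I would handle that unidentified term. Applying the law of total expectation over $A_i$ within the treated group and invoking Assumption \ref{homo}, which equates the baseline means of anticipators and non-anticipators, the mixture collapses to $\mathbb{E}[g(Y_{i0}(0))|D_i=1,A_i=0]$. For non-anticipators, Assumption \ref{onlychannel} gives $Y_{i0}(0)=Y_{i0}(0,1)=Y_{i0}$, so this conditional mean is just the mean of the \emph{observed}, bounded variable $g(Y_{i0})$ over treated non-anticipators; by Assumption \ref{outcomebound} it therefore lies in $[a,b]$. Substituting the two endpoints yields $\mu_g\in[C-b,\,C-a]$. Finally I would verify the closed-form representation: with $p=\mathbb{P}[D_i=1]$, routine moment algebra (of the Abadie reweighting type already cited in the covariates discussion) shows
\[
\mathbb{E}\!\left[\tfrac{D_i-p}{p(1-p)}g(Y_{i1})+\tfrac{1-D_i}{1-p}g(Y_{i0})\right]=\mathbb{E}[g(Y_{i1})|D_i=1]-\mathbb{E}[g(Y_{i1})|D_i=0]+\mathbb{E}[g(Y_{i0})|D_i=0]=C,
\]
so that $\mu_{g,l}=C-b$ and $\mu_{g,u}=C-a$ coincide with the claimed expressions.

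The main subtlety, and the step I expect to require the most care, is the precise role of Assumption \ref{homo}. The boundedness in Assumption \ref{outcomebound} is imposed on the observed $g(Y_{i0})$, which for anticipators records $g(Y_{i0}(1,1))$ and says nothing directly about their counterfactual $g(Y_{i0}(0))$; without homogeneity the unidentified term would be a mixture $(1-\lambda)\mathbb{E}[g(Y_{i0}(0))|D_i=1,A_i=0]+\lambda\,\mathbb{E}[g(Y_{i0}(0))|D_i=1,A_i=1]$ whose second summand is neither observed nor covered by $[a,b]$. Assumption \ref{homo} is exactly what erases this second, unbounded piece, and I would emphasize that this is why the interval does not depend on any bound $\pi$ on the anticipation share or on the magnitude restriction. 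A second point worth pinning down is sharpness: to justify calling $[C-b,C-a]$ the identified set rather than a mere outer bound, I would exhibit, for each endpoint, an admissible data-generating process in which the treated non-anticipators' baseline mean equals $a$ (respectively $b$) while reproducing the observed treated pre-period mixture $\mathbb{E}[g(Y_{i0})|D_i=1]$; because the anticipation share and the anticipators' observed outcomes are otherwise free, such constructions should exist, but checking their mutual consistency with all observed moments is the fiddly part of the argument.
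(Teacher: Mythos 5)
Your proposal is correct and follows essentially the same route as the paper's (much terser) proof: isolate the single unidentified term $\mathbb{E}[g(Y_{i0}(0))\,|\,D_i=1]$, use Assumption \ref{homo} to equate it with $\mathbb{E}[g(Y_{i0})\,|\,D_i=1,A_i=0]$, an expectation of the \emph{observed} bounded outcome, and then apply Assumption \ref{outcomebound} to obtain the interval $[C-b,\,C-a]$ with the IPW representation of $C$. Your added discussion of sharpness goes beyond what the paper establishes (its theorem, like your main argument, only delivers the outer bound), so it is a worthwhile refinement but not required to match the paper's claim.
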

	The proof is relatively straightforward because I can see that for the unobserved term, I will have
	\[\mathbb{E}[g(Y_{i0}(0,0))|D_i=1]=\mathbb{E}[g(Y_{i0})|D_i=1,A_i=0]\]
	under Assumption \ref{homo}, which can help build a bound by conditional expectation. The case with covariates is similar after taking all the things conditionally and is skipped here.
	
	One may argue the identified set given by a bounded outcome might be too loose in some circumstances, because this approach just uses the upper and lower bound of the outcome itself. If people do not like to put restrictions on the signs and magnitudes of treatment effects, I can also provide a corresponding identification result under some other assumptions. The idea is that I may not observe the anticipation status for everybody; however, if I have a bound for the anticipation probability among the group I am interested in, I can get a bound about the treatment effect by assigning those with the highest or lowest outcomes to the anticipation treatment. Among all possible anticipation treatment statuses satisfying the proportion restriction, I choose the worst and best one based on the observed outcomes so I can get corresponding upper and lower bounds. 
	\begin{satheorem}
		Under Assumptions \ref{sampling}-\ref{paralleltrends}, \ref{prob_bound} and Assumption \ref{homo}, I can build a closed interval for the parameter of interest $\mu_g$ with upper and lower bounds $\mu_{g,l}$ and $\mu_{g,u}$ satisfying 
		\begin{eqnarray*}
		\mu_{g,l}&=&\mathbb{E}\left[\frac{D_i-\mathbb{P}[D_i=1]}{\mathbb{P}[D_i=1](1-\mathbb{P}[D_i=1])}g(Y_{i1})+\frac{1-D_i}{1-\mathbb{P}[D_i=1]}g(Y_{i0})\right]\\
		&-&\mathbb{E}[g(Y_{i0})|D_i=1,g(Y_{i0})\ge g(Y_{i0})_{\eta}]\\
		\mu_{g,u}&=&\mathbb{E}\left[\frac{D_i-\mathbb{P}[D_i=1]}{\mathbb{P}[D_i=1](1-\mathbb{P}[D_i=1])}g(Y_{i1})+\frac{1-D_i}{1-\mathbb{P}[D_i=1]}g(Y_{i0})\right]\\
		&-&\mathbb{E}[g(Y_{i0})|D_i=1,g(Y_{i0})\le g(Y_{i0})_{1-\eta}]
		\end{eqnarray*}
		$g(Y_{i0})_{\alpha}$ denotes the $\alpha$-th quantile of $g(Y_{i0})$ conditional on $D_i=1$ and $\eta=\pi$.
	\end{satheorem}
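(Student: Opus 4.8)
The plan is to isolate the single object in $\mu_g$ that is not point-identified and then bound it by a trimming argument. Starting from the decomposition already derived in the proof of Theorem \ref{chap2:benchmark} (which invokes Parallel Trends, Assumption \ref{paralleltrends}), I would write
\[
\mu_g = W - \mathbb{E}[g(Y_{i0}(0))|D_i=1],
\]
where $W=\mathbb{E}\left[\frac{D_i-\mathbb{P}[D_i=1]}{\mathbb{P}[D_i=1](1-\mathbb{P}[D_i=1])}g(Y_{i1})+\frac{1-D_i}{1-\mathbb{P}[D_i=1]}g(Y_{i0})\right]$ is the common observable term appearing in both $\mu_{g,l}$ and $\mu_{g,u}$. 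The first step is to verify this identity: the Abadie-type weights reproduce $\mathbb{E}[g(Y_{i1})|D_i=1]-\mathbb{E}[g(Y_{i1})|D_i=0]+\mathbb{E}[g(Y_{i0})|D_i=0]$, which is exactly the observable portion of $\mu_g$, leaving $\mathbb{E}[g(Y_{i0}(0))|D_i=1]$ as the only unidentified term.

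Second, I would convert this unobservable conditional mean into a statement about treated non-anticipators. Because anticipation is the only channel (Assumption \ref{onlychannel}), a treated non-anticipator has observed pre-period outcome equal to the baseline, so $\mathbb{E}[g(Y_{i0})|D_i=1,A_i=0]=\mathbb{E}[g(Y_{i0}(0))|D_i=1,A_i=0]$. Assumption \ref{homo} equates the baseline mean across anticipation status within the treated group, so the law of total expectation collapses the mixture and gives
\[
\mathbb{E}[g(Y_{i0}(0))|D_i=1]=\mathbb{E}[g(Y_{i0})|D_i=1,A_i=0].
\]
The task is now to bound the mean of $g(Y_{i0})$ over the unobserved subpopulation of treated non-anticipators.

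Third is the trimming core. Conditional on $D_i=1$, the observed law of $g(Y_{i0})$ is a mixture of the non-anticipator law (weight $1-\lambda$) and the anticipator law (weight $\lambda$), with $\lambda=\mathbb{P}[A_i=1|D_i=1]\le\pi$ by Assumption \ref{prob_bound}. I would argue that, over all decompositions consistent with $\lambda\le\pi$ and an arbitrary anticipator law, the non-anticipator mean is maximized by taking $\lambda=\pi$ and assigning the lowest $\pi$-fraction of the outcome distribution to anticipators, giving the upper trimmed mean $\mathbb{E}[g(Y_{i0})|D_i=1,g(Y_{i0})\ge g(Y_{i0})_\pi]$; symmetrically the minimum is $\mathbb{E}[g(Y_{i0})|D_i=1,g(Y_{i0})\le g(Y_{i0})_{1-\pi}]$. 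Since $\mu_g=W-\mathbb{E}[g(Y_{i0}(0))|D_i=1]$ is decreasing in the unobservable term, the upper trimmed mean yields $\mu_{g,l}$ and the lower trimmed mean yields $\mu_{g,u}$, matching the stated expressions with $\eta=\pi$.

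The main obstacle is making the worst-case step rigorous. The cleanest route is a Lee-type argument: for a given $\lambda$, minimizing the anticipator mean subject to the domination constraint (that the renormalized remainder stays a valid probability law) concentrates the anticipator mass on the lowest values, and the resulting non-anticipator mean equals the $\lambda$-trimmed mean; one then checks that this envelope is monotone in $\lambda$, so the full allowance $\lambda=\pi$ binds. Establishing that the true decomposition is dominated by this extreme — not merely close to it — is the crux, and two technical wrinkles deserve care: the monotonicity in $\lambda$, and possible atoms of $g(Y_{i0})$ at the trimming quantile, which requires defining the trimmed conditional mean through the generalized inverse so that the inclusion holds as stated.
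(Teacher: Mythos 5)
Your proposal is correct and follows essentially the same route as the paper, whose proof is only a two-sentence sketch of exactly this idea (assign the anticipation group to the highest or lowest pre-period outcomes under the bounded proportion): you verify the same Abadie-weight identity for the observable term, use Assumptions \ref{onlychannel} and \ref{homo} to obtain the same key reduction $\mathbb{E}[g(Y_{i0}(0))|D_i=1]=\mathbb{E}[g(Y_{i0})|D_i=1,A_i=0]$, and then apply the Horowitz--Manski/Lee trimming bound to the treated pre-period mixture with $\lambda=\mathbb{P}[A_i=1|D_i=1]\le\pi$ binding at $\lambda=\pi$ by monotonicity of the trimmed mean. The technical caveats you flag (monotonicity in $\lambda$ and atoms at the trimming quantile) are precisely the details the paper leaves implicit, so your write-up is a faithful, more rigorous version of the paper's own argument.
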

	The proof for this theorem follows the idea that I can assign the anticipation treatment group to those with the highest or lowest outcomes under a bounded proportion. This provide bounds for the purpose of partial identification. For these bounds, all the items are observable and can be estimated, and I do not need to put restrictions on signs and magnitudes of the treatment effect.
\end{proof}
\end{document}